\definecolor{shadecolor}{gray}{0.9}
\definecolor{darkblue}{rgb}{0.0, 0.0, 0.55}
\newtheorem{theorem}{Theorem}
\newtheorem{definition}[theorem]{Definition}
\newtheorem{lemma}[theorem]{Lemma}
\newtheorem{example}[theorem]{Example}
\begin{document}
\begin{center}{\Large \textbf{
Probabilistic Theories and Reconstructions of Quantum Theory (Les Houches 2019 lecture notes)
}}\end{center}

\begin{center}
Markus P.\ M\"uller\textsuperscript{1,2,*}
\end{center}

\begin{center}
{\bf 1} Institute for Quantum Optics and Quantum Information, Austrian Academy of Sciences, Boltzmanngasse 3, A-1090 Vienna, Austria
\\
{\bf 2} Perimeter Institute for Theoretical Physics, 31 Caroline Street North, Waterloo, ON N2L 2Y5, Canada
\\
* markus.mueller@oeaw.ac.at
\end{center}

\begin{center}
March 30, 2021
\end{center}

\section*{Abstract}
{\bf
These lecture notes provide a basic introduction to the framework of generalized probabilistic theories (GPTs) and a sketch of a reconstruction of quantum theory (QT) from simple operational principles. To build some intuition for how physics could be even more general than quantum, I present two conceivable phenomena beyond QT: superstrong nonlocality and higher-order interference. Then I introduce the framework of GPTs, generalizing both quantum and classical probability theory. Finally, I summarize a reconstruction of QT from the principles of Tomographic Locality, Continuous Reversibility, and the Subspace Axiom. In particular, I show why a quantum bit is described by a Bloch ball, why it is three-dimensional, and how one obtains the complex numbers and operators of the usual representation of QT.
}

\vspace{10pt}
\noindent\rule{\textwidth}{1pt}
\tableofcontents\thispagestyle{fancy}
\noindent\rule{\textwidth}{1pt}
\vspace{10pt}

\section{What kind of ``quantum foundations''?}
These lectures will focus on some topics in the foundations of quantum mechanics. When physicists hear the words ``Quantum Foundations'', they typically think of research that is concerned with the problem of \emph{interpretation}: how can we make sense of the counterintuitive Hilbert space formalism of quantum theory? What do those state vectors and operators really tell us about the world? Could the seemingly random detector clicks in fact be the result of deterministic but unobserved ``hidden variables''? Some of this research is sometimes regarded with suspicion: aren't those Quantum Foundationists asking questions that are ultimately irrelevant for our understanding and application of quantum physics? Should we \emph{really} care whether unobservable hidden variables, parallel universes or hypothetical pilot waves are the mechanistic causes of quantum probabilities? Isn't the effort to answer such questions simply an expression of a futile desire to return to a ``classical worldview''?

For researchers not familiar with this field, it may thus come as a surprise to see that large parts of Quantum Foundations research today are \emph{not} primarily concerned with the interpretation of quantum theory (QT) --- at least not directly. Much research effort is invested in proving rigorous mathematical results that shed light on QT in a different, more ``operational'' manner, which is motivated by quantum information theory. This includes research questions like the following:
\begin{itemize}
\item[(i)] Is it possible to generate secure cryptographic keys or certified random bits even if we do not trust our devices?
\item[(ii)] Which consistent modifications of QT are in principle possible? Could some of these modifications exist in nature?
\item[(iii)] Can we understand the formalism of QT as a consequence of simple physical or information-theoretic principles? If so, could this tell us something interesting about other open problems of physics, like e.g.\ the problem of quantum gravity?
\end{itemize}
Question (i) shows by example that some of Quantum Foundations research is driven by ideas for technological applications. This is in some sense the accidental result of a fascinating development: it turned out that such ``technological'' questions are surprisingly closely related to foundational, conceptual (``philosophical'') questions about QT. To illustrate this surprising relation, consider the following foundational question:
\begin{itemize}
	\item[(iv)] Could there exist some hidden variable (shared randomness) $\lambda$ that explains the observed correlations on entangled quantum states?
\end{itemize}
Question (iv) is closely related to question (i). To see this, consider a typical scenario in which two parties (Alice and Bob) act with the goal to generate a secure cryptographic key. Suppose that Alice and Bob hold entangled quantum states and perform local measurements, yielding correlated outcomes which they can subsequently use to encrypt their messages. Could there be an eavesdropper (say, Eve) somewhere else in the world who can spy on their key? Intuitively, if so, then we could consider the key bits that Eve learns as a hidden variable $\lambda$: a piece of data (``element of reality'', see Ekert, 1991~\cite{Ekert1991}) that sits somewhere else in the world and, while being statistically distributed, can be regarded as determining Alice's and Bob's outcomes. But Bell's Theorem (Bell, 1964~\cite{Bell1964}) tells us that the statistics of some measurements on some entangled states are \emph{inconsistent} with such a (suitably formalized) notion of hidden variables, unless those variables are allowed to exert nonlocal influence. This guarantees that Alice's and Bob's key is secure in such cases, as long as there is no superluminal signalling between their devices and Eve. The conclusion holds even if Alice and Bob have no idea about the inner workings of their devices --- or, in the worst possible case, have bought these devices from Eve. This intuition can indeed be made mathematically rigorous, and has led to the fascinating field of \emph{device-independent cryptography} (Barrett, Hardy, Kent (2005)~\cite{BarrettHardyKent2005}) and \emph{randomness expansion} (Colbeck (2006)~\cite{Colbeck2006}, Colbeck and Kent (2011)~\cite{ColbeckKent2011}, Pironio et al.\ (2010)~\cite{Pironio2010}).

The preconception that Quantum Foundations research is somehow motivated by the desire to return to a classical worldview is also sometimes arising in the context of question (ii) above. It is true that the perhaps better known instance of this question asks whether QT would somehow break down and become classical in the macroscopic regime: for example, spontaneous collapse models (Ghirardi, Rimini, and Weber (1986)~\cite{GhirardiRiminiWeber1986}, Bassi \emph{et al.} (2013)~\cite{Bassi2013}) try to account for the emergence of a classical world from quantum mechanics via dynamical modifications of the Sch\"odinger equation. However, a fascinating complementary development in Quantum Foundations research --- the one that these lectures will be focusing on --- is to explore the exact opposite: \emph{could nature be even ``more crazy'' than quantum?} Could physics allow for even stronger-than-quantum-correlations, produce more involved interference patterns than allowed by QT, or enable even more magic technology than what we currently consider possible? If classical physics is an approximation of quantum physics, could quantum physics be an approximation of something even more general?

As we will see in the course of these lectures, the answer to these questions is ``yes'': nature \emph{could} in principle be ``more crazy''. The main insight will be that QT is just one instance of a large class of \emph{probabilistic theories}: theories that allow us to describe probabilities of measurement outcomes and their correlations over time and space. Another example is ``classical probability theory'' (CPT) as defined below, but there are many other ones that are equally consistent.
\begin{figure}[ht]
\begin{center}
\includegraphics[width=.4\textwidth]{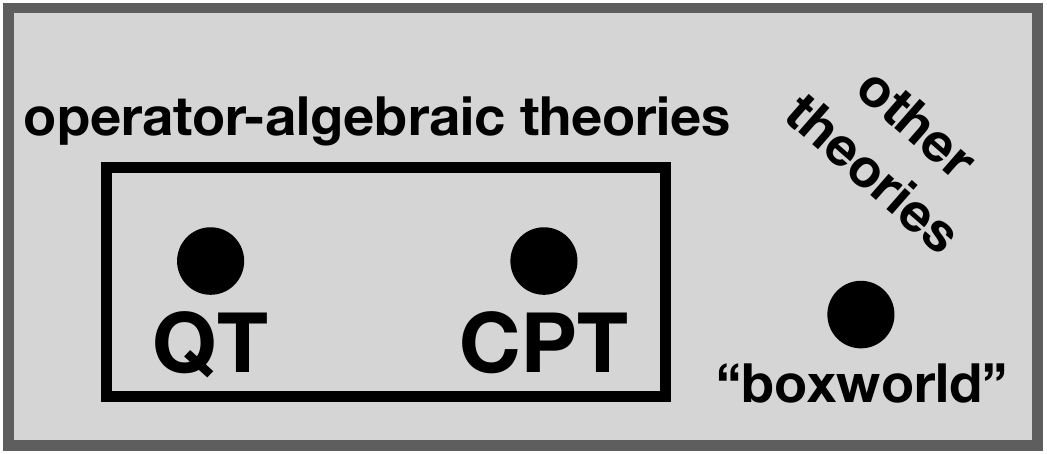}$\qquad$
\includegraphics[width=.4\textwidth]{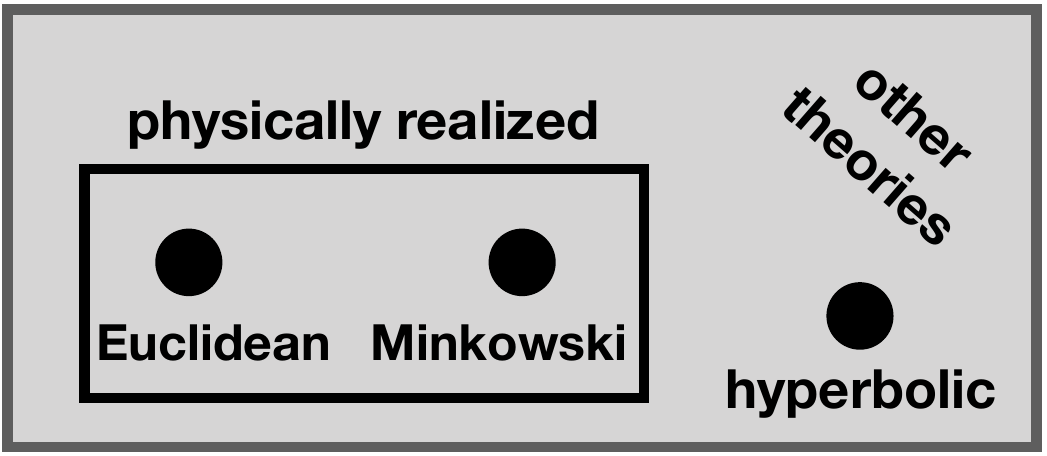}
\end{center}
\caption{\small \textbf{Left:} the ``landscape'' of probabilistic theories. QT is for quantum theory and CPT for classical probability theory (as defined later). \textbf{Right:} as a suggestive analogy (see main text), the ``landscape of theories of (spacetime) geometry''.}\label{fig_landscape1}
\end{figure}

As we will see, not only is there a simple and beautiful mathematical formalism that allows us to describe all such theories, but the new approach to QT ``from the outside'' provides a very illuminating perspective on QT itself: it allows us to understand which features are uniquely quantum and which others are just general properties of probabilistic theories. Moreover, it gives us the right mathematical tools to describe physics in the, broadly construed, ``device-independent'' regime where all we want to assume is just a set of basic physical principles. Making sparse assumptions is arguably desirable when approaching unknown physical terrain, which is why some researchers consider some of these tools (and generalizations thereof) as potentially useful in the context of quantum gravity (Oreshkov \emph{et al.} (2012)~\cite{Oreshkov2012}), and, as we will see, for fundamental experimental tests of QT.

Even more than that: as we will see in these lectures, it is possible to write down a small set of physical or information-theoretic postulates that singles out QT uniquely within the landscape of probabilistic theories. We will be able to reconstruct the full Hilbert space formalism from simple principles, starting in purely operational terms \emph{without} assuming that operators, state vectors, or complex numbers play any role in it. Not only does this shed light on the seemingly ad-hoc mathematical structure of QT, but can also indirectly give us some hints on how we might want to interpret QT.

There is a historical analogy to this strategy that has been described by Clifton, Bub, and Halvorson (2003)~\cite{CBH2003} (see, however, Brown and Timpson (2006)~\cite{BrownTimpson2006} for skeptic remarks on this perspective). Namely, the development of Einstein's theory of special relativity can be understood along similar lines: there is a landscape of ``theories of (spacetime) geometry'', characterized by an overarching, operationally motivated mathematical framework (perhaps that of semi-Riemannian geometry). This landscape contains, for example, Euclidean geometry (a very intuitive notion of geometry, comparable to CPT in the probabilistic landscape) and Minkowski geometry (less intuitive but physically more accurate, comparable to QT in the probabilistic landscape). Minkowski spacetime is characterized by the Lorentz transformations, which have been historically discovered in a rather ad hoc manner --- simply postulating these transformations should invite everybody to ask ``why?'' and ``could nature have been different''? But Einstein has shown us that two simple physical principles single out Minkowski spacetime, and thus the Lorentz transformations, uniquely from the landscape: the relativity principle and the light postulate. This discovery is without doubt illuminating by explaining ``why'' the Lorentz transformations have their specific form, and it has played an important role in the subsequent development of General Relativity.

In these lectures, we will see how a somewhat comparable result can be obtained for QT, and we will discuss how and why this can be useful. But before going there, we need to understand how a ``generalized probabilistic theory'' can be formalized. And even before doing so, we need to get rid of the widespread intuition that all conceivable physics must either be classical or quantum, and build some intuition on \emph{how physics could be more general than quantum}.

\section{How could physics be more general than quantum?}
\label{SecHowGeneral}
Everybody can take an existing theory and modify it arbitrarily; but the art is to find a modification that is self-consistent, physically meaningful, and consistent with other things we know about the world.

That these desiderata are not so easy to satisfy is illustrated by Weinberg's (1989)~\cite{Weinberg1989} attempt to introduce nonlinear corrections to quantum mechanics. QT predicts that physical quantities are described by Hermitian operators (``observables'') $A$, and their expectation values are essentially bilinear in the state vector, i.e.\ $\langle A\rangle=\langle\psi|A|\psi\rangle$. (This property is closely related to the linearity of the Schr\"odinger equation.) Weinberg decided to relax the condition of bilinearity in favor of a weaker, but arguably also natural condition of homogeneity of degree one, and explored the experimental predictions of the resulting modification of quantum mechanics.

However, shortly after Weinberg's paper had appeared, Gisin (1990)~\cite{Gisin1990} pointed out that this modification of QT has a severe problem: it allows for faster-than-light communication. Gisin showed how local measurements of spacelike separated parties on a singlet state allows them to construct a ``Bell telephone'' with instantaneous information transfer within Weinberg's theory. Standard QT forbids such information transfer, because bilinearity of expectation values implies (in some sense --- we will discuss more details of this ``no-signalling'' property later) that different mixtures with the same local reduced states cannot be distinguished. Superluminal information transfer is in direct conflict with Special Relativity, showing that QT is in some sense a very ``rigid'' theory that cannot be so easily modified (see also Simon \emph{et al.}, 2001~\cite{Simon2001}).

This suggests to search for modifications of QT not on a formal, but on an \emph{operational} level: perhaps a more fruitful way forward is to abandon the strategy of direct modification of any of QT's \emph{equations}, and instead to reconsider the basic \emph{framework} which we use to describe simple laboratory situations. GPTs constitute a framework of exactly that kind. They generalize QT in a consistent way, and do so without introducing pathologies like superluminal signalling.

To get an intuition for the basic assumptions of the GPT framework, let us first discuss two examples of potential phenomena that would transcend classical \emph{and} quantum physics: superstrong nonlocality and higher-order interference.

\subsection{Nonlocality beyond quantum mechanics}
\label{SubsecNonlocality}
Consider the situation in Figure~\ref{fig_alicebob}. In such a ``Bell scenario'', we have two agents (usually called Alice and Bob) who each independently perform some local actions. Namely, Alice holds some box to which she can input a freely chosen variable $x$ and obtain some outcome $a$. Similarly, Bob holds a box to which he can input some freely chosen variable $y$ and obtain some outcome $b$. Alice's and Bob's boxes may both have interacted in the past, so that they may have become statistically correlated or (in quantum physics) entangled. This will in general lead to correlations between Alice's and Bob's outcomes.

\begin{figure}[ht]
\begin{center}
\includegraphics[width=.5\textwidth]{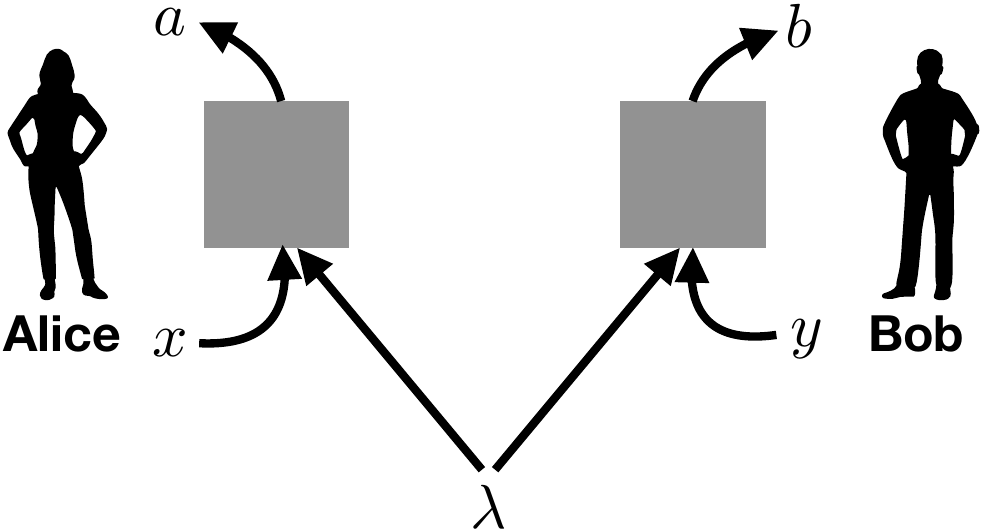}
\end{center}
\caption{\small Schematic figure of a Bell scenario. Within any physical theory (classical physics, quantum physics, or other), we can imagine laboratory situations in which the causal structure is specifically as depicted (in particular, Alice and Bob cannot communicate). Regardless of the interpretation of ``probability'', we can talk about situations in which there is data to be chosen independently ($x$ and $y$) and recorded ($a$ and $b$) such that it is meaningful to talk about the probabilities of the recordings, given the choices. Data from the past that may influence the devices will be labelled $\lambda$. Physical theories differ in the set of probability tables (``correlations'') that they allow in principle in this scenario.}\label{fig_alicebob}
\end{figure}
While more general scenarios can be studied, let us for simplicity focus on the case that there are two agents (Alice and Bob) who can choose between two possible inputs $x,y\in\{0,1\}$ and obtain one of two possible outcomes $a,b\in\{-1,+1\}$. In quantum information jargon, we are on our way to introduce the $(2,2,2)$-Bell correlations, where $(m,n,k)$ denotes a scenario with $m$ agents who each have $n$ possible inputs and $k$ possible outcomes. The resulting statistics is thus described by a probability table (often called ``behavior'')
\[
   P(a,b|x,y),
\]
i.e.\ the conditional probability of Alice's and Bob's outcomes, given their choices of inputs. It is clear that these probabilities must be non-negative and $\sum_{a,b}P(a,b|x,y)=1$ for all $x,y$ (we will assume this in all of the following), but further constraints arise from additional physical assumptions.

Let us first assume that the scenario is described by \textbf{classical probability theory} --- perhaps because we are in the regime of classical physics or every-day life. Then we can summarize the causal past of the experiment --- everything that has happened earlier and that may have had some influence on the experiment, directly or indirectly --- into some variable $\lambda$. Shortly before Alice and Bob input their choices into the boxes, the variables $x,y,\lambda$ are in some (unknown) configuration, distributed according to some probability distribution $P$. Furthermore, the outcomes $a$ and $b$ are random variables; in the formalism of probability theory, they must hence be functions of $x,y$ and $\lambda$, i.e.\ $a=f_A(x,y,\lambda)$, $b=f_B(x,y,\lambda)$. Recall that in probability theory, random variables are functions on the sample space; and the sample space, describing the configuration of the world, consists of only $x$, $y$ and $\lambda$. We have simply made $\lambda$ big enough to contain everything in the world that is potentially relevant for the experiment.

But what if the boxes introduce some additional randomness, perhaps tossing coins to produce the outcome? In this case, the coin toss can be regarded as deterministic if only all the factors that influence the coin toss (properties of the coin, the surrounding air molecules etc.) are by definition contained in $\lambda$. (Or, alternatively, we simply regard the unknown state of the coin as a part of $\lambda$.) The ``hidden variable'' $\lambda$ may thus be a quite massive variable, and learning its value may be practically impossible. In other words, all randomness can, at least formally, be considered to result from the experiment's past (in physics jargon, the fluctuations of its initial conditions).

So far, our description is completely general and does not yet take into account the assumed \emph{causal structure} of the experiment: assuming that $x$ and $y$ can be chosen freely amounts to demanding that their values are statistically uncorrelated with everything that has happened in the past, i.e.\ with $\lambda$. Furthermore, locality implies that $a$ cannot depend on $y$ and $b$ cannot depend on $x$. This means that the scenario must satisfy
\[
   P(x,y,\lambda)=P_X(x)\cdot P_Y(y)\cdot P_{\Lambda}(\lambda),\qquad a=f_A(x,\lambda),\enspace b=f_B(y,\lambda).
\]
For a more detailed explanation of how and why the causal structure of the setup implies these assumptions, see e.g.\ the book by Pearl (2009)~\cite{Pearl}, or Wood and Spekkens (2015)~\cite{WoodSpekkens2015}. These assumptions are typically subsumed under the notion of ``local realism'', and readers who want to learn more about this are invited to consult more specialized references. A great starting point are the Quantum Foundations classes given by Rob Spekkens at Perimeter Institute; these can be watched for free on \emph{http://pirsa.org}.

Note that $P(a,b|x,y,\lambda)=\delta_{a,f_A(x,\lambda)}\delta_{b,f_B(y,\lambda)}=P_A(a|x,\lambda)P_B(b|y,\lambda)$ (with $\delta$ the Kronecker delta). Hence, by the chain rule of conditional probability,
\begin{equation}
   P(a,b|x,y)=\sum_{\lambda\in\Lambda} P_A(a|x,\lambda) P_B(b|y,\lambda)P_\Lambda(\lambda).
   \label{eqLocalRealism}
\end{equation}
What we have thus shown is that any probability table in classical physics that is realizable within the causal structure as depicted in Figure~\ref{fig_alicebob} must be \emph{classical} according to the following definition:
\begin{definition}
\label{DefClassical}
A probability table $P(a,b|x,y)$ is \emph{classical} if there exists a probability space $(P,\Omega,\Sigma)$ with $P=P_X\cdot P_Y\cdot P_\Lambda$ some product distribution, $\Omega=X\times Y \times \Lambda$, where $X=Y=\{0,1\}$ and $\Lambda$ arbitrary, such that Eq.~(\ref{eqLocalRealism}) holds. If this is the case, then we call $(P,\Omega,\Sigma)$ a \emph{hidden-variable model} for the probability table.

Denote by $\mathcal{C}_{2,2,2}$ the set of all classical probability tables.
\end{definition}
Instead of assuming that $\Lambda$ is a finite discrete set, we could also have allowed a more general measurable space like $\mathbb{R}^n$, but here this would not change the picture because the sets of inputs and outcomes are discrete and finite (in other words, considering only finite discrete $\Lambda$ is no loss of generality here).

In the derivation above, we have obtained a model for which $P_A(a|x,\lambda)$ and $P_B(b|y,\lambda)$ are deterministic, i.e.\ take only the values zero and one. But even \emph{without} this assumption, probability tables that are of the form~(\ref{eqLocalRealism}) can be realized within the prescribed causal structure according to classical probability theory: one simply has to add local randomness that makes the response functions $P_A$ and $P_B$ act nondeterministically to their inputs. Thus, we do not need to postulate in Definition~\ref{DefClassical} that $P_A$ and $P_B$ must be deterministic.

It is self-evident that the classical probability tables satisfy the \textbf{no-signalling conditions} (Barrett 2007~\cite{Barrett2007}): that is, $P_A(a|x,y):=\sum_b P(a,b|x,y)$ is independent of $y$, and $P_B(b|x,y):=\sum_a P(a,b|x,y)$ is independent of $x$. This means that Alice ``sees'' the local marginal distribution $P_A(a|x,y)=P_A(a|x)=\sum_{\lambda\in\Lambda} P_A(a|x,\lambda)P_\Lambda(\lambda)$ if she does not know what happens in Bob's laboratory, \emph{regardless} of Bob's choice of input $y$ (and similarly with the roles of Alice and Bob exchanged). If this was \emph{not} true, then Bob could signal to Alice simply by choosing the local input to his box. The causal structure that we have assumed from the start precludes such magic behavior.

We can reformulate what we have found above in a slightly more abstract way that will become useful later. Note that we have found that the classical behaviors are exactly those that can be expressed in the form~(\ref{eqLocalRealism}) with $P_A$ and $P_B$ deterministic (if we want). Thus, we have shown that
\begin{lemma}
\label{LemDNS}
A probability table is \emph{classical} if and only if it is a convex combination of \emph{deterministic non-signalling} probability tables.	
\end{lemma}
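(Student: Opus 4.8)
The plan is to prove both directions by making the notion of a \emph{deterministic non-signalling} table completely explicit and then simply regrouping the defining sum. First I would observe that a deterministic table must assign, for each input pair $(x,y)$, probability one to a single outcome pair, so it has the form $D(a,b|x,y)=\delta_{a,g_A(x,y)}\,\delta_{b,g_B(x,y)}$ for some functions $g_A,g_B$ with values in $\{-1,+1\}$. Imposing the no-signalling conditions forces Alice's marginal $\sum_b D(a,b|x,y)=\delta_{a,g_A(x,y)}$ to be independent of $y$, hence $g_A(x,y)=f_A(x)$, and symmetrically $g_B(x,y)=f_B(y)$. Thus the deterministic non-signalling tables are exactly the product tables $D(a,b|x,y)=\delta_{a,f_A(x)}\,\delta_{b,f_B(y)}$, of which there are only finitely many (four choices of $f_A$ times four of $f_B$). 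With this characterization in hand the lemma becomes almost immediate.

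For the forward direction, suppose $P$ is classical. By the derivation preceding Definition~\ref{DefClassical}, we may take the response functions to be deterministic, so that $P_A(a|x,\lambda)=\delta_{a,f_A^\lambda(x)}$ and $P_B(b|y,\lambda)=\delta_{b,f_B^\lambda(y)}$ while Eq.~(\ref{eqLocalRealism}) holds. For each fixed $\lambda$ the product $P_A(a|x,\lambda)P_B(b|y,\lambda)$ is precisely a deterministic non-signalling table $D_\lambda$ of the explicit form above, and the weights $P_\Lambda(\lambda)\ge 0$ sum to one. Hence $P=\sum_\lambda P_\Lambda(\lambda)\,D_\lambda$ exhibits $P$ as a convex combination of deterministic non-signalling tables. (If $\Lambda$ is large, finiteness of the set of such $D_\lambda$ together with Carath\'eodory's theorem reduces this to a genuine finite convex combination, but this refinement is not needed for the statement.)

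For the converse, suppose $P=\sum_i p_i\,D_i$ with $p_i\ge 0$, $\sum_i p_i=1$, and each $D_i(a,b|x,y)=\delta_{a,f_A^i(x)}\,\delta_{b,f_B^i(y)}$ deterministic non-signalling. I would simply read the index $i$ as the hidden variable: set $\Lambda=\{i\}$, $P_\Lambda(i)=p_i$, $P_A(a|x,i)=\delta_{a,f_A^i(x)}$ and $P_B(b|y,i)=\delta_{b,f_B^i(y)}$. Then $P(a,b|x,y)=\sum_i P_\Lambda(i)\,P_A(a|x,i)\,P_B(b|y,i)$, which is exactly Eq.~(\ref{eqLocalRealism}); together with the product distribution $P_X\cdot P_Y\cdot P_\Lambda$ this is a hidden-variable model, so $P$ is classical.

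I expect essentially no hard step here: the real work — that classical tables admit a decomposition with \emph{deterministic} local responses — has already been carried out in the text before the lemma, and the only thing left to check carefully is the easy fact that a product of deterministic local responses is automatically normalized and non-signalling, which follows because its marginals factor. The one conceptual point worth flagging is the identification of the extreme objects of the classical set (the deterministic non-signalling tables) with the product tables $\delta_{a,f_A(x)}\,\delta_{b,f_B(y)}$; this is what makes the ``convex combination'' in the statement a combination of finitely many explicit vertices, i.e.\ it identifies $\mathcal{C}_{2,2,2}$ as a polytope.
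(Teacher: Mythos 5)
Your proposal is correct and takes essentially the same route as the paper, which likewise obtains the lemma by regrouping Eq.~(\ref{eqLocalRealism}) with deterministic response functions as a convex combination of the product tables $P^\lambda(a,b|x,y):=P_A(a|x,\lambda)P_B(b|y,\lambda)$, and by reading the mixing index as the hidden variable $\lambda$ for the converse. The only thing you add is the explicit verification that a deterministic non-signalling table must factor as $\delta_{a,f_A(x)}\,\delta_{b,f_B(y)}$ --- a fact the paper states only in passing (its $P^\lambda$ are ``deterministic \ldots and non-signalling (in fact, uncorrelated)'') --- which is a useful clarification of the converse direction rather than a different approach.
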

Here and in the following, we use some basic notions from \emph{convex geometry} (see e.g.\ the textbook by Webster 1994~\cite{Webster1994}). If we have a finite number of elements $x_1,\ldots,x_n$ of some vector space (for example probability distributions, or vectors in $\mathbb{R}^m$), then another element $x$ is a \emph{convex combination} of these if and only if there exist $p_1,\ldots,p_n\geq 0$ with $\sum_{i=1}^n p_i=1$ and $\sum_{i=1}^n p_i x_i=x$. Intuitively, we can think of $x$ as a ``probabilistic mixture'' of the $x_i$, with weights $p_i$. Indeed, the right-hand side of~(\ref{eqLocalRealism}) defines a convex combination of the $P^\lambda(a,b|x,y):=P_A(a|x,\lambda)P_B(b|y,\lambda)$. These are probability tables that are deterministic (take only values zero and one) and non-signalling (in fact, uncorrelated).

This reformulation has intuitive appeal: classically, all probabilities can consistently be interpreted as arising from \emph{lack of knowledge}. Namely, we can put everything that we do not know into some random variable $\lambda$. If we knew $\lambda$, we could predict the values of all other variables with certainty.

Quantum theory, however, allows for a different set of probability tables in the scenario of Figure~\ref{fig_alicebob}: instead of a joint probability distribution, we can think of a (possibly entangled) quantum state that has been distributed to Alice and Bob. The inputs to Alice's box can be interpreted as measurement choices (e.g.\ the choice of angle for a polarization measurement), and the outcomes can correspond to the actual measurement outcomes. This leads to the following definition:
\begin{definition}
A probability table $P(a,b|x,y)$ is \emph{quantum} if it can be written in the form
\[
   P(a,b|x,y)={\rm tr}\left[\rho_{AB}(E_x^a\otimes F_y^b)\right],
\]
with $\rho_{AB}$ some density operator on the product of two Hilbert spaces $\mathcal{H}_A\otimes\mathcal{H}_B$, measurement operators $E_x^a,F_y^b\geq 0$ (i.e.\ operators that are positive semidefinite) and $E_x^{-1}+E_x^{+1}=\mathbf{1}_A$ as well as $F_y^{-1}+F_y^{+1}=\mathbf{1}_B$ for all $x$ and all $y$.

Denote by $\mathcal{Q}_{2,2,2}$ the set of all quantum probability tables.
\end{definition}
For our purpose, we will ignore some subtleties of this definition. For example, the state $\rho_{AB}$ can, without loss of generality, always be chosen pure, $\rho_{AB}=|\psi\rangle\langle\psi|_{AB}$, and the measurement operators can be chosen as projectors (see e.g.\ Navascues \emph{et al.}, 2015~\cite{Navascues2015}). We will restrict our considerations to finite-dimensional Hilbert spaces; for the subtleties of the infinite-dimensional case, see e.g.\ Scholz and Werner, 2008~\cite{ScholzWerner2008}, and Ji et al., 2020~\cite{Ji2020}.

\begin{lemma}
Here are a few properties of the classical and quantum probability tables:
\begin{itemize}
	\item[(i)] Both $\mathcal{C}_{2,2,2}$ and $\mathcal{Q}_{2,2,2}$ are \emph{convex} sets, i.e.\ convex combinations of classical (quantum) probability tables are classical (quantum).
	\item[(ii)] $\mathcal{C}_{2,2,2}\subset\mathcal{Q}_{2,2,2}$.
	\item[(iii)] Every $P\in\mathcal{Q}_{2,2,2}$ is non-signalling.
	\item[(iv)] $\mathcal{C}_{2,2,2}$ is a \emph{polytope}, i.e.\ the convex hull of a \emph{finite number} of probability tables. However, $\mathcal{Q}_{2,2,2}$ is not.
\end{itemize}
\end{lemma}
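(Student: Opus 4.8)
The plan is to handle the four items separately, disposing of (i)--(iii) by short direct arguments and concentrating the real effort on the non-polytope claim in~(iv). I would begin by noting that Lemma~\ref{LemDNS} already does double duty. A \emph{deterministic} non-signalling table must, for each input pair, return one outcome pair with certainty, and non-signalling forces the chosen $a$ to depend on $x$ alone and $b$ on $y$ alone; hence every such table has the product form $\delta_{a,f_A(x)}\delta_{b,f_B(y)}$ with $f_A,f_B\colon\{0,1\}\to\{-1,+1\}$. There are exactly $4\times4=16$ of these, so Lemma~\ref{LemDNS} exhibits $\mathcal C_{2,2,2}$ as the convex hull of a finite set. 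A convex hull is convex, which settles the classical half of~(i), and the convex hull of finitely many points is by definition a polytope, which settles the first half of~(iv).

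For the quantum half of~(i) I would use the standard direct-sum trick. Given two quantum tables realised on $\mathcal H_A^{(1)}\otimes\mathcal H_B^{(1)}$ and $\mathcal H_A^{(2)}\otimes\mathcal H_B^{(2)}$, I form $\mathcal H_A=\mathcal H_A^{(1)}\oplus\mathcal H_A^{(2)}$ and $\mathcal H_B=\mathcal H_B^{(1)}\oplus\mathcal H_B^{(2)}$, take the block-diagonal state $p\,\rho^{(1)}\oplus(1-p)\rho^{(2)}$ and the block-diagonal measurement operators built from the two given effects on the two summands; the cross blocks do not contribute to the trace, the normalisation conditions hold blockwise, and the outcome is $p\,P^{(1)}+(1-p)P^{(2)}$. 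For~(ii) I would exhibit an explicit realisation: writing a classical table in the form~(\ref{eqLocalRealism}), take the classically correlated state $\rho_{AB}=\sum_\lambda P_\Lambda(\lambda)\,|\lambda\rangle\langle\lambda|\otimes|\lambda\rangle\langle\lambda|$ together with the diagonal effects $E_x^a=\sum_\lambda P_A(a|x,\lambda)|\lambda\rangle\langle\lambda|$ and $F_y^b=\sum_\lambda P_B(b|y,\lambda)|\lambda\rangle\langle\lambda|$; these are positive semidefinite, they sum to $\mathbf 1$ over the outcome because $P_A,P_B$ are normalised, and the trace formula reproduces~(\ref{eqLocalRealism}) verbatim. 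Finally~(iii) is immediate from completeness of the local POVMs: $\sum_b P(a,b|x,y)={\rm tr}[\rho_{AB}(E_x^a\otimes\mathbf 1_B)]$ is manifestly independent of $y$, and the symmetric computation handles Alice.

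The hard part is showing that $\mathcal Q_{2,2,2}$ is \emph{not} a polytope. Because every affine slice of a polytope is again a polytope, it is enough to find a single slice of $\mathcal Q_{2,2,2}$ that fails to be one. I would take the affine slice on which all single-party marginals are uniform; there a table is determined by its four correlators $E_{xy}=\sum_{a,b}ab\,P(a,b|x,y)$, and by Tsirelson's theorem the quantum-achievable ones are precisely those of the form $E_{xy}=\vec a_x\cdot\vec b_y$ with unit vectors, equivalently (Landau, Masanes) those satisfying the nonlinear constraint $|\arcsin E_{00}+\arcsin E_{01}+\arcsin E_{10}-\arcsin E_{11}|\le\pi$ together with its three sign variants. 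The crucial feature is that the resulting boundary is curved, not piecewise flat: along the surface where this arcsine expression equals $\pi$ the supporting hyperplanes of the body touch it at a continuum of distinct, strictly exposed points, each of which is therefore the unique maximiser of a suitable generalised-CHSH functional and hence an extreme point of the convex body. Since a polytope has only finitely many extreme points, a continuum of them shows the slice, and with it $\mathcal Q_{2,2,2}$, is not a polytope.

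I expect the delicate step to be exactly this last one: verifying the \emph{strict} exposedness of the boundary correlators, that is, ruling out flat faces on the arcsine surface so that a genuine continuum of extreme points is obtained, rather than merely establishing their membership in the boundary. The essential input is Tsirelson's rigidity --- the essential uniqueness, up to local isometry, of the optimal measurement configurations --- which forces the relevant supporting functionals to have unique maximisers; the remainder is bookkeeping, together with the elementary fact that an affine slice of a polytope is a polytope.
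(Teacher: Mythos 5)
Your treatment of (i)--(iii) and of the polytope claim for $\mathcal{C}_{2,2,2}$ is correct and is essentially the paper's own (very terse) route made rigorous: the paper explicitly declines to give full proofs, pointing only to Lemma~\ref{LemDNS} plus finiteness of the deterministic non-signalling tables for the classical statements (your count of $4\times 4=16$ is right), to the embedding of classical probability into a commuting subalgebra for (ii) --- which is exactly your diagonal state-and-effect realisation of Eq.~(\ref{eqLocalRealism}) written out --- and to a ``direct'' computation for (iii), which is your POVM-completeness argument; your direct-sum construction is the standard way to make ``(i) can easily be proved directly'' precise for the quantum set. The genuine divergence is the second half of (iv): the paper offers \emph{no} argument whatsoever that $\mathcal{Q}_{2,2,2}$ fails to be a polytope (its sketch covers only $\mathcal{C}_{2,2,2}$), so here you supply something the text omits. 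Your route --- pass to the uniform-marginal slice, identify it with the Tsirelson correlator body via the Landau--Masanes arcsine characterisation, and exhibit a continuum of exposed (hence extreme) points on the curved boundary --- is sound, and you correctly isolate the crux: uniqueness of the maximiser of each supporting functional. That step is indeed fillable, since rigidity/self-testing for the (tilted) CHSH family fixes the optimal strategy up to local isometry and therefore fixes the optimal correlators uniquely, and a polytope has only finitely many extreme points.

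One remark that would lighten your argument considerably: you need far less than a continuum of extreme points. Since an affine slice of a polytope is again a polytope, a single strictly curved arc in the boundary of a \emph{two-dimensional} slice already finishes the proof, with no exposedness or rigidity input at all. Concretely, on the uniform-marginal slice with $E_{00}=E_{01}=E_{10}=t$ and $E_{11}=s$, the binding arcsine inequality $3\arcsin t-\arcsin s\le\pi$ reads $s\ge 4t^3-3t$, and the entire cubic arc $s=4t^3-3t$ is attained by coplanar measurements on a singlet (take angle differences $\delta,-\delta,-\delta,-3\delta$, so that $t=-\cos\delta$ and $s=-\cos 3\delta=4t^3-3t$). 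This arc passes through the Tsirelson point $(t,s)=(1/\sqrt{2},-1/\sqrt{2})$, where $3t-s=2\sqrt{2}$, and has nonvanishing curvature there; a plane convex set whose boundary contains a strictly curved arc is not a polygon, hence $\mathcal{Q}_{2,2,2}$ is not a polytope. With this variant, the only nontrivial external input is the necessity direction of the arcsine bound (or any Tsirelson-type bound for this one-parameter family of functionals), rather than the full rigidity machinery you invoke.
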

Let us not prove all of these statements here, but simply explain some key ideas. Property (i) can easily be proved directly. For (ii), note that classical probability theory can be embedded in a commuting subalgebra of the algebra of quantum states and observables. Property (iii) is also easy to prove directly, and shows that measurements on entangled quantum states cannot lead to superluminal information transfer. For (iv), the \emph{convex hull} of some points in a vector space is defined as the set of all vectors that can be obtained as convex combinations of those points. But there is only a finite number of deterministic non-signalling probability tables, and thus we obtain the statement for $\mathcal{C}_{2,2,2}$ by using Lemma~\ref{LemDNS}.

The fact that $\mathcal{C}_{2,2,2}$ is a \emph{strict subset} of $\mathcal{Q}_{2,2,2}$ is a consequence of Bell's (1964)~\cite{Bell1964} theorem, and it can be demonstrated e.g.\ via the CHSH inequality (Clauser \emph{et al.}, 1969~\cite{Clauser1969}): if $P$ is any probability table, denote by $\mathbf{E}_{x,y}(P)$ the expectation value of the product of outcomes $a\cdot b$ on choice of inputs $x,y$, i.e.
\[
   \mathbf{E}_{x,y}(P):=P(+1,+1|x,y)+P(-1,-1|x,y)-P(+1,-1|x,y)-P(-1,+1|x,y),
\]
and consider the specific linear combination
\[
   \mathbf{E}(P):=\mathbf{E}_{0,0}(P)+\mathbf{E}_{0,1}(P)+\mathbf{E}_{1,0}(P)-\mathbf{E}_{1,1}(P).
\]
Then the CHSH Bell inequality (exercise!) states that
\[
   -2\leq \mathbf{E}(P)\leq 2 \qquad\mbox{for all }P\in\mathcal{C}_{2,2,2}.
\]
However, there are quantum probability tables that violate this inequality. These can be obtained, for example, via projective measurements on singlet states (see e.g.\ Peres 2002~\cite{Peres2002}). The largest possible violation is known as the \emph{Tsirelson bound} (Tsirelson 1980~\cite{Tsirelson1980})
\[
   \max_{P\in\mathcal{Q}_{2,2,2}} \mathbf{E}(P)=2\sqrt{2}.
\]
In particular, we find that $\mathcal{C}_{2,2,2}\subsetneq \mathcal{Q}_{2,2,2}$: nature admits ``stronger correlations'' than predicted by classical probability theory --- but still not ``strong enough'' for superluminal information transfer.

This simple insight has motivated Popescu and Rohrlich (1994)~\cite{PopescuRohrlich1994} to ask: \emph{are the quantum probability tables the most general ones that are consistent with relativity}? In other words, if we interpret the no-signalling conditions as the minimal prerequisites for probability tables to comply with the causal structure of Figure~\ref{fig_alicebob} within relativistic spacetime, then is QT perhaps the most general theory possible under these constraints?

The (perhaps surprising) answer is \emph{no}: there are probability tables that are not allowed by QT, but that are nonetheless non-signalling. An example is given by the ``PR box''
\[
   P^{\rm PR}(a,b|x,y):=\left\{\begin{array}{cl} \frac 1 2 & \mbox{if }a\cdot b =(-1)^{xy} \\ 0 & \mbox{otherwise.}\end{array}\right.
\]
It is easy to check that this defines a valid probability table which satisfies the no-signalling conditions. If the two inputs are $(x,y)=(1,1)$ then the outcomes are perfectly anticorrelated; in all other cases, they are perfectly correlated. Thus
\[
   \mathbf{E}(P^{\rm PR})=4,
\]
which is larger than the maximal quantum value of $2\sqrt{2}$ (the Tsirelson bound). Therefore $P^{\rm PR}\not\in\mathcal{Q}_{2,2,2}$. But if we denote the set of all non-signalling probability tables by $\mathcal{NS}_{2,2,2}$, then $P^{\rm PR}\in\mathcal{NS}_{2,2,2}$. Thus, we have the set inclusions
\[
   \mathcal{C}_{2,2,2}\subsetneq \mathcal{Q}_{2,2,2}\subsetneq \mathcal{NS}_{2,2,2}.
\]
Like the set of classical probability tables, $\mathcal{NS}_{2,2,2}$ turns out to be a polytope, with corners (extremal points) given by the deterministic non-signalling tables as well as eight ``PR boxes'', i.e.\ versions of $P^{\rm PR}$ where inputs or outcomes have been relabelled. This leads to the picture in Figure~\ref{fig_correlations}.

\begin{figure}[ht]
\begin{center}
\includegraphics[width=.2\textwidth]{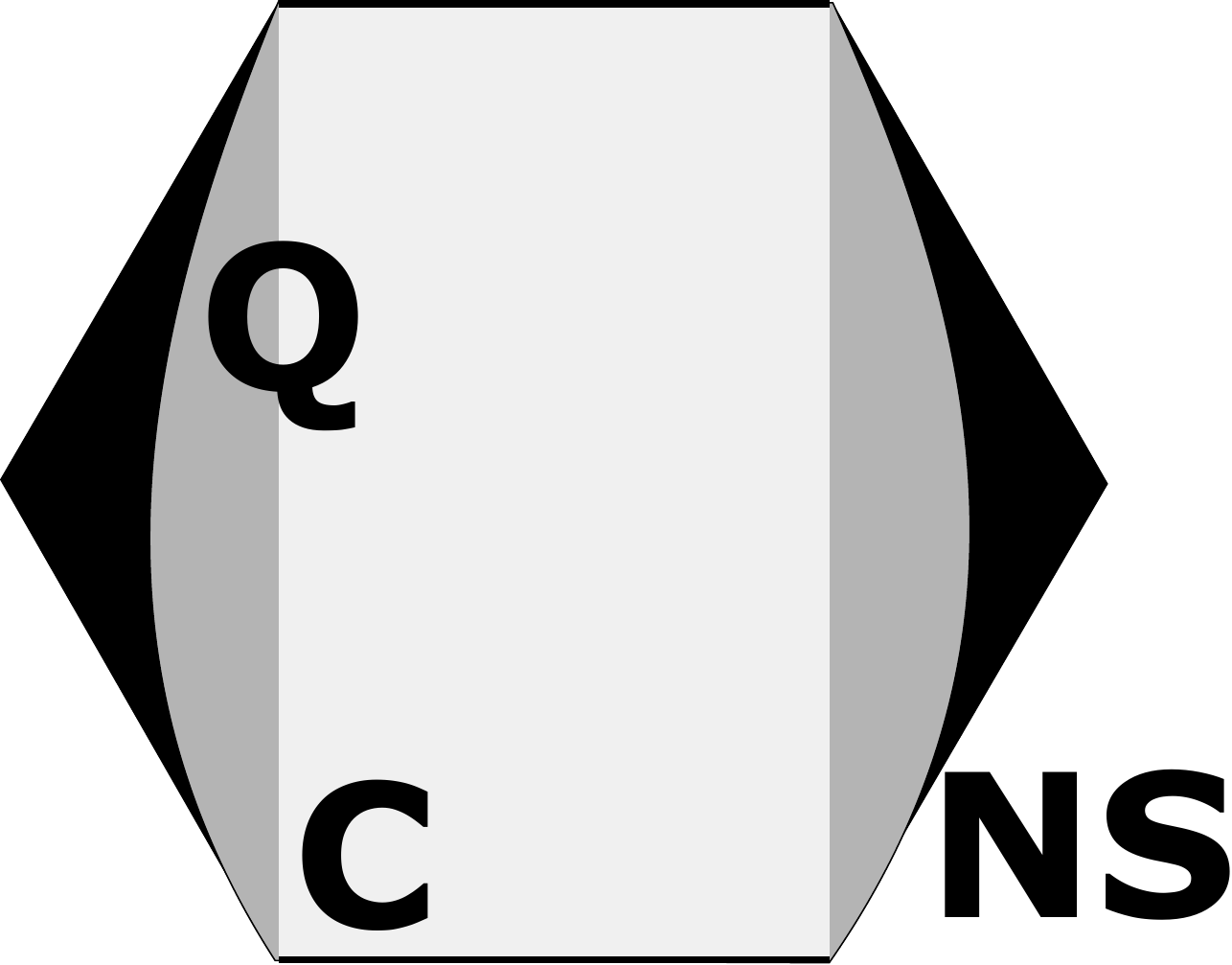}
\end{center}
\caption{\small Schematic figure of the probability tables (``behaviors'') that can be realized within classical probability theory ($\mathcal{C}$), within quantum physics ($\mathcal{Q}$), or within any non-signalling probabilistic theory ($\mathcal{NS}$). In the case of $(m,n,k)=(2,2,2)$, these convex sets are eight-dimensional; the 16 parameters in $P(a,b|x,y)$ are reduced by the normalization and no-signalling equalities to $8$ free parameters. Both $\mathcal{C}$ and $\mathcal{NS}$ are polytopes, and the extremal points of $\mathcal{C}$ (which are also extremal points of $\mathcal{Q}$ and $\mathcal{NS}$) are the deterministic non-signalling probability tables. In contrast to $\mathcal{C}$ and $\mathcal{NS}$, the quantum set $\mathcal{Q}$ is not a polytope.}
\label{fig_correlations}
\end{figure}
Thus, we have found one possible way in which nature could be more general than quantum: \emph{it could admit ``stronger-than-quantum'' Bell correlations}. Clearly, simply writing down the probability table $P^{\rm PR}$ does not tell us anything about a possible place in the world where these correlations would potentially fit: we do not have a \emph{theory} that would predict these correlations to appear in specific experimental scenarios. However, the same can be said about bare abstract quantum states: simply writing down a singlet state, for example, does not directly tell us what this state is supposed to represent. We need to impose additional assumptions (e.g.\ that the abstract quantum bit corresponds to a polarization degree of freedom that reacts to spatial rotations in certain ways) in order to extract concrete predictions for specific experiments.\\

\textbf{Further reading}. The insight above has sparked a whole new field of research, asking ``why'' nature does not allow for stronger-than-quantum non-signalling correlations. One short answer is of course this: because physics is quantum. But simply pointing to the fact that the theories of physics that we have today are all formulated in terms of operator algebras and Hilbert spaces does not seem like a particularly insightful answer. Instead, the hope was to find \emph{simple physical principles} that would explain, without direct reference to the quantum formalism, why nonlocality ends at the quantum boundary. An excellent overview on this research is given in (Popescu, 2014~\cite{Popescu2014}). Several physical principles have been discovered over the years which imply \emph{part} of the quantum boundary, some of them including the Tsirelson bound: for example, some stronger-than-quantum correlations would trivialize communication complexity (van Dam 2013~\cite{vanDam2013} --- published 8 years after the preprint on arXiv.org), violate information causality (Pawlowski \emph{et al.}, 2009~\cite{Pawlowski2009}), or have an ill-defined classical limit (Navascu\'es and Wunderlich, 2009~\cite{NavascuesWunderlich2009}). However, the discovery of \emph{almost quantum correlations} (Navascu\'es \emph{et al.}, 2015~\cite{Navascues2015}), a natural set of correlations slightly larger than $\mathcal{Q}$ that satisfies all known reasonable principles, has severely challenged this particular research direction. An \emph{exact} characterization of the quantum set, however, is achieved by the complementary program of reconstructing QT (not just its probability tables, but the full sets of states, transformations and measurements) within the framework of generalized probabilistic theories. This will be the topic of the last part of these lectures. In the case of $(m,n,k)=(2,2,2)$, the set $\mathcal{Q}$ can also be exactly characterized in terms of the detectors' local responses to spatial symmetries (Garner, Krumm, M\"uller, 2020~\cite{GarnerKrummMueller2020}).

\subsection{Higher-order interference}
\label{SubsecInterference}
Another way in which nature could be more general than quantum is in the properties of interference patterns that are generated by specific experimental arrangements. In 1994, Sorkin~\cite{Sorkin1994} proposed a notion of ``order-$n$ interference'' which contains ``no interference at all'' as its $n=1$ case (as in classical physics), and quantum interference as the $n=2$ case. In principle, however, nature could admit interference of order $3$ or higher, and these potential beyond-quantum phenomena can be tested experimentally (up to a small caveat to be discussed below).

The starting point is the well-known double-slit experiment as depicted in Figure~\ref{fig_slits} (left). A particle impinges on an arrangement that contains two slits (S1 and S2), finally hits a screen, and a detector clicks if the particle impinges in a particular region of the screen. The setup involves the additional possibility of \emph{blocking a slit}: for example, if a blockage is put behind slit S2, then the particle will either be annihilated (intuitively, this happens if it tried to travel through slit S2) or it will pass slit S1. We can now experimentally determine the probability of the detector click, conditioned on blocking (or not) one of the slits:
\begin{eqnarray*}
P_{12}&:=&{\rm Prob}({\rm click}\,\,|\,\,\mbox{slits }S1\mbox{ and }S2\mbox{ are open}),\\
P_{1}&:=&{\rm Prob}({\rm click}\,\,|\,\,\mbox{only slit }S1\mbox{ is open}),\\
P_{2}&:=&{\rm Prob}({\rm click}\,\,|\,\,\mbox{only slit }S2\mbox{ is open}).
\end{eqnarray*}
If we realize an experiment of this form within classical physics (think of goal wall shooting), then we expect that $P_{12}=P_1+P_2$. However, in QT, we find that in general
\[
   P_{12}\neq P_1+P_2,
\]
and this is exactly what is called \emph{interference}. Namely, we can think of $S1$ and $S2$ as two orthonormal basis states of a two-dimensional Hilbert space that describes ``which slit'' the particles passes. At the time of passage, we have a state $|\psi\rangle=\alpha|S1\rangle+\beta|S2\rangle$ (more generally, a density matrix $\rho$ which could be $|\psi\rangle\langle\psi|$ but could also be a mixed state). Putting a blockage such that only slit $Si$ is open implements the transformation $\rho\mapsto\hat P_i\rho\hat P_i$, where $\hat P_i=|Si\rangle\langle Si|$ (if both slots are open this is $\hat P_{12}=\hat P_1+\hat P_2=\mathbf{1}$). The final detector click corresponds to a measurement operator (POVM element) $Q$, such that the probabilities are given by $P_I={\rm tr}(\hat P_I \rho\hat P_I Q)$, with $I\in\{1,2,12\}$. Then $P_{12}\neq P_1+P_2$ is a consequence of the off-diagonal terms (coherences) $\langle S_1|\rho|S_2\rangle$.

\begin{figure}[ht]
\begin{center}
\includegraphics[width=.7\textwidth]{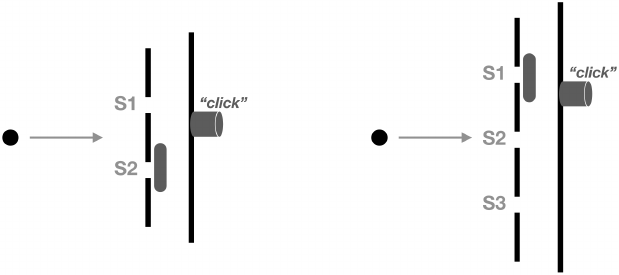}
\end{center}
\caption{\small Interference at an $M$-slit arrangement, where $M=2$ (left, the double-slit) resp.\ $M=3$ (right, the triple-slit). For a fixed initial state preparation, we can ask for the probability of the detector click, depending on which (if any) slits are blocked. On the left, slit $S2$ is blocked, and the resulting click probability is denoted $P_1$. On the right, slit $S1$ is blocked, and the resulting click probability is denoted $P_{23}$. For the double slit, $P_{12}\neq P_1+P_2$ is an expression of interference of order $n=2$. For the triple-slit, QT predicts Eq.~(\ref{eqNoThird}). A violation of this would be a novel physical phenomenon beyond QT, namely \emph{third-order interference}.}\label{fig_slits}
\end{figure}
Let us now consider a slightly more involved situation: let us add a third slit to the arrangement, as in Figure~\ref{fig_slits} (right). As before, we can block one of the slits, but now we can also block two of the slits at the same time. This allows us to define the probabilities $P_{123},P_{12},P_{13},P_{23},P_1,P_2,P_3$ in an obvious way. For example, $P_{13}$ denotes the probability of the detector click, given that (only) slit $S2$ is blocked. What we now find is that the following identity holds according to QT:
\begin{equation}
P_{123}=P_{12}+P_{13}+P_{23}-P_1-P_2-P_3.
\label{eqNoThird}
\end{equation}
Why is that the case? First, note that Eq.~(\ref{eqNoThird}) holds in \emph{classical} physics: there, we can decompose all terms into single-slit contributions (i.e.\ $P_{123}=P_1+P_2+P_3$, $P_{12}=P_1+P_2$ etc.). The reason why this equation also holds in the quantum case can be demonstrated as follows. Think of the initial ``which-slit'' state as a $3\times 3$ density matrix $\rho=\rho_{123}$. Then we have
\begin{eqnarray*}
   \left(\begin{array}{ccc}\bullet & \bullet & \bullet \\ \bullet & \bullet & \bullet \\ \bullet & \bullet & \bullet\end{array}\right) &=& 
   \left(\begin{array}{ccc}\bullet & \bullet & 0 \\ \bullet & \bullet & 0 \\ 0 & 0 & 0\end{array}\right) +
   \left(\begin{array}{ccc}\bullet & 0 & \bullet \\ 0 & 0 & 0 \\ \bullet & 0 & \bullet\end{array}\right) +
   \left(\begin{array}{ccc}0 & 0 & 0 \\ 0 & \bullet & \bullet \\ 0 & \bullet & \bullet\end{array}\right) \\
   &&-
   \left(\begin{array}{ccc}\bullet & 0 & 0 \\ 0 & 0 & 0 \\ 0 & 0 & 0\end{array}\right) -
   \left(\begin{array}{ccc}0 & 0 & 0 \\ 0 & \bullet & 0 \\ 0 & 0 & 0\end{array}\right) -
   \left(\begin{array}{ccc}0 & 0 & 0 \\ 0 & 0 & 0 \\ 0 & 0 & \bullet\end{array}\right).
\end{eqnarray*}
That is, $\rho_{123}=\rho_{12}+\rho_{13}+\rho_{23}-\rho_1-\rho_2-\rho_3$, where $\rho_I=\hat P_I\rho_{123}\hat P_I$ for $I\in \{1,2,3,12,13,23,123\}$, and the projectors $\hat P_I$ are defined analogously to above.

In principle, however, we can imagine that nature produces an interference pattern that \emph{violates} Eq.~(\ref{eqNoThird}) --- in this case, we would say that nature exhibits \emph{third-order interference}.

The scheme above also gives a nice illustration why classical physics (or, rather, classical probability theory) does not admit second-order interference: namely, classical states are probability vectors, and so, for example,
\[
   \left(\begin{array}{c}\bullet \\ \bullet \\ \bullet \end{array}\right) = 
      \left(\begin{array}{c}\bullet \\ 0 \\ 0 \end{array}\right) +
         \left(\begin{array}{c}0 \\ \bullet \\ 0 \end{array}\right) +
            \left(\begin{array}{c}0 \\ 0 \\ \bullet \end{array}\right).
\]
Thus, classically, $P_{123}=P_1+P_2+P_3$ (or, for two slits, $P_{12}=P_1+P_2$).

From this starting point emerges an obvious idea: could there be physics in which the states are not tensors with one component (classical probability theory) or two (QT), but three or more? Instead of density matrices, could there be a regime of physics that is governed by ``density tensors''? This idea has first appeared in the work of Hardy (2001)~\cite{Hardy2001} and Wootters (1986)~\cite{Wootters1986}. Recent work by Daki\'c \emph{et al.} (2014)~\cite{Dakic2014} constructs possible ``density cube'' states, but does not give a well-defined theory or state space that contains them. However, consistent theories that predict higher-order interference \emph{can} be constructed within the framework of generalized probabilistic theories (GPTs) that we will describe next (Ududec \emph{et al.} 2010~\cite{Ududec2010}), and the absence of third-order interference can be used as an axiom to single out QT (Barnum \emph{et al.} 2014~\cite{Barnum2014}; see however also Barnum and Hilgert 2019~\cite{BarnumHilgert2019}). Intuitively, while CPT can arise from QT via decoherence, one might imagine that QT can similarly arise via some decoherence process from such a more general theory. However, as Lee and Selby (2018)~\cite{LeeSelby2018} have shown, any suitable causal ``super-quantum'' GPT of this kind must necessarily violate the so-called \emph{purification principle}.

The absence of third- or higher-order interference can in principle be tested experimentally, and this has in fact be done by several different groups. To the best of our knowledge, the first experimental search for third-order interference (making single photons impinge on actual spatial slit arrangements) has been performed by Sinha \emph{et al.} (2010)~\cite{Sinha2010}, with negative result as expected. The relative weight of third-order contributions to the interference pattern (which is predicted to be exactly zero by QT) has been bounded to be less than $10^{-3}$ by Kauten \emph{et al.} (2017)~\cite{Kauten2017}. For other experimental approaches, see the references in this paper.

When we compare potential beyond-quantum interference with potential beyond-quantum nonlocality, then the former has an additional problem of experimental testing that the latter does not have: to certify stronger-than-quantum correlations (if they exist), all that we have to do in principle is to design an experiment in which the structure of spacetime enforces the causal structure of a Bell scenario as depicted in Figure~\ref{fig_alicebob}. We know how to close all potential statistical loopholes (see the recent loophole-free Bell tests, for example Giustina \emph{et al.} 2015~\cite{Giustina2015}) to certify that an (unlikely) violation of, say, the Tsirelson bound would unambiguously falsify QT. The absence of third-order interference in the form of Eq.~(\ref{eqNoThird}), on the other hand, makes a couple of additional assumptions. In particular, we need to ensure that the different experimental alternatives (corresponding to the different slits) are physically implemented by operations that correspond to \emph{orthogonal projectors}.

If this is not ensured, then deviations from Eq.~(\ref{eqNoThird}) can be detected even within standard quantum mechanics (Rengaraj \emph{et al.} 2018~\cite{Rengaraj2018}). In other words: we need a physical certificate (without assuming the validity of QT) which, under the additional assumption that QT is valid, implies that the blocking transformations correspond to orthogonal projectors. This insight underlies the necessity to have a well-defined mathematical framework of \emph{probabilistic theories} that provides a formalism to describe phenomena like higher-order interference and that tells us, for example, what the analogue (if it exists) of an orthogonal projection would be in generalizations of QT (for approaches to this particular question, see e.g.\ Kleinmann 2014~\cite{Kleinmann2014}, Chiribella and Yuan 2014~\cite{ChiribellaYuan2014}, Chiribella and Yuan 2016~\cite{ChiribellaYuan2016}, Chiribella \emph{et al.} 2020~\cite{Chiribella2020}). We will describe one such framework in the next section.

\section{Generalized probabilistic theories}
The framework of generalized probabilistic theories (GPTs) has been discovered and reinvented many times over the decades, in slightly different versions (see ``Further reading'' below). The exposition below follows Hardy 2001~\cite{Hardy2001} and Barrett 2007~\cite{Barrett2007} (both excellent alternative references for an introduction to the GPT framework). However, the mathematical formalization will mostly follow the notation by Barnum (for example Barnum et al.~2014~\cite{Barnum2014}).

\begin{figure}[ht]
\begin{center}
\includegraphics[width=.6\textwidth]{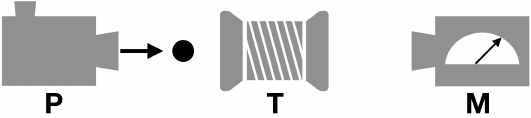}
\end{center}
\caption{\small The paradigmatic laboratory situation that can be used to motivate the mathematical framework of GPTs: Preparation $\mathbf{P}$, transformation $\mathbf{T}$, and measurement $\mathbf{M}$.}
\label{fig_ptm}
\end{figure}
Consider the simple laboratory situation sketched in Figure~\ref{fig_ptm}. On every run of the experiment, a preparation device $\mathbf{P}$  spits out some physical system. In the end, a measurement device $\mathbf{M}$ will be applied to the physical system, yielding one of, for simplicity, finitely many outcomes $1,2,\ldots,m$. In between, we may decide to apply a transformation device $\mathbf{T}$ (which may well be ``do nothing''). We assume that it is meaningful to speak of the \emph{probability} of an outcome $a$, given a choice of preparation and transformation devices: that is, we are interested in the probabilities
\begin{equation}
   P(a|\mathbf{P},\mathbf{T},\mathbf{M}).
   \label{eqProbGPT}
\end{equation}
This probability can be understood in different ways: for example, we can imagine that the experiment is repeated a large number of times, yielding different outcomes on every run, with a limiting relative frequency given by $P$. Alternatively, we might be interested in doing the experiment only once, and would like to place bets on the possible outcomes before performing it (as in a Bayesian reading of probability). Regardless of the interpretation of probability, what a GPT does is to tell us \emph{how to describe all possible preparations, transformations and measurements of a physical system, and how to compute the outcome probabilities~(\ref{eqProbGPT})}.

The following subsections will give an introduction to the formalism of GPTs. While I will leave out many subtleties for reasons of brevity, readers who are interested in further details are referred to the book by Holevo (2010)~\cite{Holevo2010}.

\subsection{States, transformations, measurements}
Since a GPT is only targeted at describing the probabilities~(\ref{eqProbGPT}) and nothing else, we have to start by removing all redundancy from the description. Consider two preparation devices $\mathbf{P}$ and $\mathbf{P}'$. Suppose that these devices prepare a particle in exactly the same way, with the only difference that $\mathbf{P}$ was manufactured by a company called Smith\&Sons, while $\mathbf{P}'$ was manufactured by Miller\&Sons. The devices have small labels with the names of the manufacturers at their bottom. Other than that, whatever we decide to measure on the prepared particles gives the exact same probabilities in both cases. Then, we should really consider $\mathbf{P}$ and $\mathbf{P'}$ as ``the same'' for our purpose.

More generally, we will say that two preparation procedures $\mathbf{P}$ and $\mathbf{P}'$ are \emph{equivalent} if \emph{for all possible transformations and measurements that we can in principle perform, all outcome probabilities will be identical}. In fact, we can regard every transformation $\mathbf{T}$ followed by a measurement $\mathbf{M}$ as a combined measurement $\mathbf{M}'$. So equivalence of $\mathbf{P}$ and $\mathbf{P}'$ can be defined by saying that \emph{all possible measurements give identical outcome probabilities}, without specifically mentioning transformations.

Once we have introduced this equivalence relation, we can define the notion of a \textbf{state}: \emph{a state is an equivalence class of preparation procedures}. In other words, a state subsumes all possible measurement statistics of a physical system, and nothing else.

As a more interesting example of equivalence of preparations, consider the following two procedures in quantum theory:
\begin{itemize}
	\item[(i)] Following a coin toss, an electron is prepared either in state $|\uparrow\rangle$ or $|\downarrow\rangle$ with probability $50\%$ each.
	\item[(ii)] The entangled state $(|\uparrow \uparrow\rangle+|\downarrow \downarrow \rangle)/\sqrt{2}$ of two electrons is prepared, and then one of the two electrons is discarded.
\end{itemize}
Both procedures amount to the preparation of the maximally mixed state $\rho=\frac 1 2 \mathbf{1}$. In particular, spin measurements in \emph{any} direction on the resulting physical systems will always yield completely random outcomes.

Once we have defined the notion of a state, we can also speak about the \textbf{state space} of a physical system: this is simply the collection of all possible states that can in principle be prepared by suitable preparation procedures. We will denote states by the Greek letter $\omega$, and state spaces by $\Omega$ (more details below).

Given that we aim at describing the probabilities of events, state spaces come with an important additional piece of structure: \emph{convexity}. That is, we can always think of the following situation: a random number $i\in\{1,2,\ldots,n\}$ is obtained with probability $p_i$ (for example via some measurement on some state, or by tossing coins), and then state $\omega_i\in\Omega$ is prepared, while $i$ is discarded. The resulting procedure will still correspond to the preparation of a physical system that leads to well-defined measurement probabilities. Hence there will be an associated state $\omega\in\Omega$. By construction, it satisfies
\begin{equation}
   P(a|\omega,\mathbf{M})=\sum_{i=1}^n p_i P(a|\omega_i,\mathbf{M})
   \label{eqInitialLinearity}
\end{equation}
for all possible outcomes $a$ of all possible measurements $\mathbf{M}$. This equation allows us to introduce a natural convex-linear structure on the state space. That is, we can write
\begin{equation}
   \omega=\sum_{i=1}^n p_i \omega_i,
   \label{eqConvexity}
\end{equation}
and by doing so introduce the useful convention that states are \emph{elements of some vector space $A$ over the real numbers $\mathbb{R}$}. (I am skipping several details of argumentation at this point; interested readers are again invited to look into Holevo's book). In the following, we will always assume for mathematical simplicity that this vector space is finite-dimensional. We will also denote physical systems by upper-case letters like $A$ (for example, the spin degree of freedom of an electron), the corresponding state spaces by $\Omega_A$, and the vector space on which $\Omega_A$ lives will also be denoted $A$.

Before giving some examples, let us make two more physically motivated assumptions that significantly simplify the mathematical description. Namely, let us assume that $\Omega_A$ is \textbf{compact}, i.e.\ topologically closed and bounded. Intuitively, $\Omega_A$ should be bounded since probabilities are bounded between zero and one, and probabilities are all we are ever computing from a state. Furthermore, suppose that we have a sequence of states $\omega_1,\omega_2,\omega_3,\ldots$ that are all elements of $\Omega_A$, i.e.\ that can be in principle prepared on our physical system $A$. Suppose that $\lim_{n\to\infty}\omega_n=\omega$ for some element of the vector space $\omega\in A$. Is $\omega$ also a valid state? Physically, it should be: after all, we can prepare \emph{arbitrarily good approximations to $\omega$}, and this is all we can ever hope to achieve in the laboratory anyway. This motivates to demand that $\omega\in\Omega_A$, i.e.\ that $\Omega_A$ is a closed set.

Furthermore, Eq.~(\ref{eqConvexity}) above implies that \emph{convex combinations of valid states are again valid states}, or, in other words that \textbf{state spaces are convex sets}.

This gives us (almost) the following definition:
\begin{definition}
\label{DefStateSpace}
A \emph{state space} is a pair $(A,\Omega_A)$, where $A$ is a real finite-dimensional vector space, and $\Omega_A\subset A$ is a compact convex set of dimension $\dim\Omega_A=\dim A -1$ such that there is a linear ``normalization functional'' $u_A:A\to\mathbb{R}$ with $u_A(\omega)=1$ for all $\omega\in\Omega_A$. The \emph{state cone} is $A_+:=\{\lambda\omega\,\,|\,\,\lambda\geq 0, \omega\in\Omega_A\}$.
\end{definition}
This definition says a couple of things. First, we want the ability to mathematically describe the \emph{normalization} of a state: basically, $u_A(\omega)$ is the probability that the physical system is there. If $\omega\in\Omega_A$, i.e.\ if $\omega$ is a normalized state (as ``usual''), then this probability is one. However, it is often useful to talk about unnormalized states --- in particular, \emph{subnormalized} states for which $u_A(\omega')<1$. These could come, for example, from preparing a normalized state $\omega$ and then discarding the system with probability $1-\lambda$, resulting in the subnormalized state $\omega'=\lambda \omega$. The set $A_+$ is exactly the set of elements that can be obtained in this way, via any non-negative scalar factor $\lambda\geq 0$.

The dimension condition can then be interpreted as saying that \emph{we choose the vector space $A$ as small as possible}: it contains enough dimensions to carry all normalized and unnormalized states, and not more. We also see that $\Omega_A=\{\omega\in A_+\,\,|\,\, u_A(\omega)=1\}$, and $\dim A_+=\dim A$. Furthermore, $A_+$ is a \emph{pointed convex cone} in the sense of convex geometry (Aliprantis and Tourky, 2007~\cite{AliprantisTourky2007}): $A_+ + A_+\subseteq A_+$, $\lambda A_+\subseteq A_+$ for all $\lambda\geq 0$, and $A_+\cap (-A_+)=\{0\}$.

Before turning to some examples, we introduce some further terminology:
\begin{definition}
\label{DefPure}
A state $\omega\in\Omega_A$ is called \emph{pure} if it is an extremal point of the convex set $\Omega_A$, and otherwise \emph{mixed}.	
\end{definition}
This definition uses a basic notion of convex geometry (Webster, 1994~\cite{Webster1994}): an \emph{extremal point} of a convex set $\Omega$ is an element of that set that cannot be written as a non-trivial convex combination of other points. Hence, a state $\omega\in\Omega_A$ is pure if and only if
\[
   \omega=\lambda \omega_1 + (1-\lambda)\omega_2 \mbox{ with }0\leq \lambda\leq 1, \omega_1,\omega_2\in\Omega_A \enspace\Rightarrow\enspace \lambda\in\{0,1\}\mbox{ or }\omega_1=\omega_2.
\]
That is, if we write $\omega$ as a convex combination of $\omega_1$ and $\omega_2$, then this convex combination must be trivial: either $\omega_1=\omega_2$ ($=\omega$), or $\lambda$ is zero or one.

\begin{figure}[ht]
\begin{center}
\includegraphics[width=.5\textwidth]{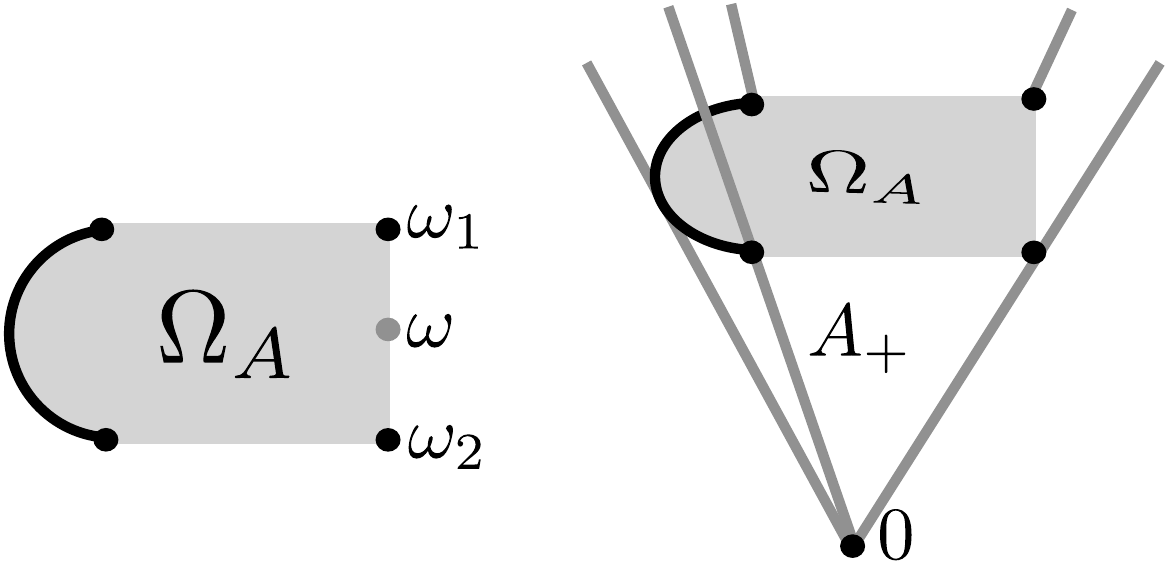}
\end{center}
\caption{\small An arbitrary state space and its pure states (black points in $\Omega_A$).}\label{fig_statespace}
\end{figure}
Figure~\ref{fig_statespace} gives a (somewhat arbitrary) example of a state space and of its pure states. In this example, we have the vector space $A=\mathbb{R}^3$ that carries a state cone $A_+$ (right); the set of normalized states $\Omega_A$ is depicted on the left. The pure states are those marked in black: this includes the half-circle boundary and the pure states $\omega_1$ and $\omega_2$. The state $\omega$ is on the boundary of the state space, but it is \emph{not} pure: it can be written as the non-trivial convex combination $\omega=\frac 1 2 \omega_1+\frac 1 2 \omega_2$. All states in the interior of $\Omega_A$ are mixed states, too.

The two most important examples are given by classical probability theory (CPT) and quantum theory (QT):
\begin{example}[$N$-outcome classical probability theory]
\label{ExClassical}
The vector space is $A=\mathbb{R}^N$, and the normalized states are the discrete probability distributions:
\[
   \Omega_A=\left\{(p_1,\ldots,p_N)\in\mathbb{R}^N\,\,|\,\, p_i\geq 0, \enspace \sum_{i=1}^N p_i=1\right\}.
\]
Geometrically, this set is a simplex. The cases $N=2$ and $N=3$ are depicted in Figure~\ref{fig_simplices}. The $N=4$ case would correspond to a tetrahedron embedded in $\mathbb{R}^4$.

\begin{figure}[ht]
\begin{center}
\includegraphics[width=.5\textwidth]{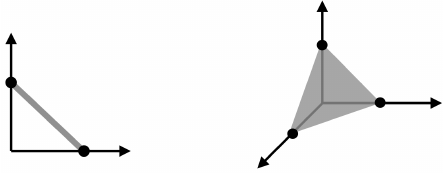}
\end{center}
\caption{\small State spaces of classical probability (i.e.\ $\Omega_A$ is depicted in gray). Left: the classical bit with $N=2$ outcomes. Right: The classical ``trit'' with $N=3$.}\label{fig_simplices}
\end{figure}
As one can see, the normalization functional is $u_A(p)=p_1+p_2+\ldots+p_N$ (where $p=(p_1,\ldots,p_N)$). The state cone $A_+$ consists of all those vectors in $\mathbb{R}^N$ that have only non-negative entries, i.e.\ the positive orthant. It is also not difficult to see that there are $N$ pure states, namely $p^{(i)}=(0,\ldots,0,\underbrace{1}_i,0,\ldots,0)$ (with $i=1,\ldots,N$). These are exactly the \emph{deterministic} distributions.
\end{example}
Convex geometry allows us to draw some conclusions about states that are valid for every GPT. For example, due to the Minkowski-Carath\'eodory theorem (Webster, 1994~\cite{Webster1994}), every state $\omega\in\Omega_A$ can be written as a convex combination of at most $\dim A$ pure states. In the classical case discussed above, the corresponding decomposition of $\omega$ into pure states is unique; but in general, this is not the case, as the example of quantum theory demonstrates. To state it, we will use the notation $\mathbf{M}_n(\mathbb{C})$ for the $n\times n$ complex matrices, and $\mathbf{H}_n(\mathbb{C})$ for the \emph{Hermitian} complex $n\times n$ matrices, i.e.\ those $M\in \mathbf{M}_n(\mathbb{C})$ with $M^\dagger=M$.

\begin{example}[$N$-outcome quantum theory]
\label{ExQuantum}
The vector space is $A=\mathbf{H}_N(\mathbb{C})$ --- note that this is a vector space over the \emph{reals}, not over the complex numbers, since it is not closed under multiplication with the imaginary unit $i$. The set of normalized states is the set of \emph{density matrices},
\[
   \Omega_A=\{\rho\in A\,\,|\,\, \rho\geq 0,\enspace {\rm tr}(\rho)=1\}.
\]
Here and in the following, the notation $\rho\geq 0$ denotes the fact that $\rho$ is positive-semidefinite, i.e.\ $\langle\psi|\rho|\psi\rangle\geq 0$ for all $|\psi\rangle\in\mathbb{C}^N$. This is equivalent to $\rho$ being Hermitian and having only non-negative eigenvalues.

The normalization functional is $u_A(\rho):={\rm tr}(\rho)$, and the state cone becomes the \emph{positive semidefinite cone}, $A_+=\{M\in A\,\,|\,\, M\geq 0\}$.
\end{example}
In the quantum case, the general definition of a ``pure state'' (Definition~\ref{DefPure} above) reduces to the usual definition of a pure quantum state: every density matrix can be diagonalized, $\rho=\sum_{i=1}^N p_i |i\rangle\langle i|$ for some orthonormal basis $\{|i\rangle\}_{i=1}^N$, and if the $p_i$ are not all zero or one, then this defines a non-trivial convex decomposition of $\rho$ into other states. Hence $\rho$ is pure if and only if it can be written as a one-dimensional projector, i.e.\ as $\rho=|\psi\rangle\langle\psi|$ for some suitable $|\psi\rangle\in\mathbb{C}^N$.

What do the quantum state spaces look like -- geometrically, as convex sets? For the case $N=2$ (the qubit), the answer is simple and easy to depict (see e.g.\ Nielsen and Chuang, 2000~\cite{NielsenChuang2000}): we can write every $2\times 2$ density matrix $\rho$ in the form
\[
   \rho=\frac 1 2 \left(\begin{array}{cc} 1+r_3 & r_1-i r_2 \\ r_1+i r_2 & 1-r_3\end{array}\right).
\]
With the ``Bloch vector'' $r=(r_1,r_2,r_3)$, we have the equivalence
\[
   \rho\geq 0 \enspace \Leftrightarrow \enspace \lambda_{1/2}=\frac 1 2 \left(1\pm\sqrt{r_1^2+r_2^2+r_3^2}\right)=\frac 1 2 (1\pm |r|)\geq 0,
\]
where $\lambda_{1/2}$ denotes the two eigenvalues of $\rho$. This parametrization identifies the qubit state space $\Omega_A$ with the \emph{Bloch ball} -- the unit ball in three dimensions. It is crucial that this parametrization is \emph{linear}, so that we can interpret convex mixtures in the ball as probabilistic mixtures of states. The Bloch ball is sketched in Figure~\ref{fig_blochball}.

\begin{figure}[ht]
\begin{center}
\includegraphics[width=.17\textwidth]{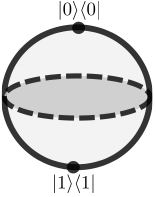}
\end{center}
\caption{\small The state space of a quantum bit can be represented as a three-dimensional ball, the Bloch ball. The pure quantum states lie on the boundary (the sphere), with orthogonal states on antipodal points of the sphere. The center represents the mixed state $\frac 1 2 \mathbf{1}$. In contrast to the classical states, convex decompositions of mixed states into pure states are not unique.}\label{fig_blochball}
\end{figure}
As expected, the pure states lie in the boundary of the state space -- but in this case, \emph{every boundary point} is in fact a pure state. This is a very special property (called ``strict convexity'' of $\Omega_A$) that is generically absent, as Figure~\ref{fig_statespace} shows. Note that this property also holds for the classical bit (a one-dimensional line segment).

What about $N\geq 3$ --- are these state spaces also balls, perhaps of some higher dimension? A moment's thought shows that this cannot be the case: all these state spaces contain mixed states in their topological boundary. For example, for $N=3$, the state $\rho={\rm diag}\left(\frac 1 2, \frac 1 2,0\right)$ is mixed but lies on the boundary of the set of density matrices (there are unit trace matrices with negative eigenvalues arbitrarily close to that state). Hence these state spaces cannot be strictly convex, and in particular, they cannot correspond to Euclidean balls. Instead, these state spaces are compact convex sets with quite complex and intriguing structure. A beautiful attempt to visualize the $N=3$ case can be found in Bengtsson \emph{et al.}, 2013~\cite{Bengtsson2013}, and a much more comprehensive introduction to the geometry of quantum states is given in the book by Bentsson and \.{Z}yczkowski, 2017~\cite{BengtssonZyczkowski}.

So far, we have only talked about \emph{states}; let us now see how to describe \emph{measurements} in the GPT framework. Our starting point are Eq.~(\ref{eqInitialLinearity}) and~(\ref{eqConvexity}), which say that probabilistic mixtures of preparation procedures should lead to the identical mixtures of the corresponding outcome probabilities. In other words, \emph{outcome probabilities of measurements must be linear functionals of the state}. This motivates the following definition.
\begin{definition}[Measurements]
\label{DefMeasurements}
Let $(A,\Omega_A)$ be a state space. A linear functional, i.e.\ an element of the dual space $e\in A^*$, is an \emph{effect} if it attains values between zero and one on all normalized states, i.e.\ $0\leq e(\omega)\leq 1$ for all $\omega\in\Omega_A$.

An $n$-outcome measurement (with $n\in\mathbb{N}$ arbitrary) is a collection of effects $e^{(1)},\ldots,e^{(n)}$ with the property that $e^{(1)}+\ldots+e^{(n)}=u_A$.

The \emph{effect cone} is $A_+^*=\{\lambda\cdot e\,\,|\,\, e \mbox{ is an effect, }\lambda\geq 0\}$.
\end{definition}
If we prepare a system in state $\omega\in\Omega_A$ and perform an $n$-outcome measurement, then the probability of the $i$-th outcome must certainly lie in the interval $[0,1]$, and it must be a linear functional of the state: hence it must be given by $e^{(i)}(\omega)$, where $e^{(i)}$ is some effect. The total outcome probability must be unity, so that $e^{(1)}(\omega)+\ldots+e^{(n)}(\omega)=1$ for all $\omega\in\Omega_A$. Since the normalized states span the vector space $A$, this is only possible if $e^{(1)}+\ldots+e^{(n)}=u_A$, the normalization functional.

The effect cone is an object of mathematical convenience. In convex geometry terminology (Aliprantis and Tourky, 2007~\cite{AliprantisTourky2007}), this is exactly the \emph{dual cone} of $A_+$, i.e.
\[
   A_+^*=\{e\in A^*\,\,|\,\, e(\omega)\geq 0 \mbox{ for all }\omega\in A_+\}.
\]
Sometimes, we will call the elements of $A_+^*$ ``unnormalized effects'' (since their value can be larger than one on some states). There are a couple of interesting properties of the dual cone; for example, in our case, $A_+^{**}=A_+$. In other words, the unnormalized effects are exactly the functionals that give non-negative values on all unnormalized states -- and \emph{the unnormalized states are exactly the vectors that give non-negative values on all unnormalized effects}. This expresses a certain form of duality between states and effects.

Let us now discuss the \textbf{measurements in CPT and QT}. In $N$-outcome CPT, the vector space is $A=\mathbb{R}^N$, and it is convenient for us to identify it with its dual space via the dot product, $A\simeq A^*$. Effects are then vectors too, such that $e(\omega)=e\cdot\omega$. To check that a functional $e=(e_1,\ldots,e_N)$ is a valid effect, it is sufficient to check that it yields probabilities in the interval $[0,1]$ on all \emph{pure} states (the rest follows from convexity). Clearly, this is true if and only if $0\leq e_i\leq 1$ for all $i$ --- i.e.\ \emph{$e$ is a valid effect if all its entries lie in the unit interval}. In particular, the normalization functional is a valid effect (as always), and $u_A=(1,1,\ldots,1)$. The effect cone $A_+^*$ is thus the set of all vectors with non-negative entries --- which is the same as the state cone.

For $N$-outcome QT, let us also identify $A=\mathbf{H}_N(\mathbb{C})$ and its dual space via some inner product, namely $\langle X,Y\rangle:={\rm tr}(XY)$ the Hilbert-Schmidt inner product. For example, this means that the normalization functional becomes the unit matrix, $u_A=\mathbf{1}$, since $u_A(\rho)={\rm tr}(\rho)=\langle\mathbf{1},\rho\rangle$. An effect is now a self-adjoint matrix $E$ with $0\leq {\rm tr}(E\rho)\leq 1$. As in the CPT case, it is sufficient to check this on pure states $\rho=|\psi\rangle\langle\psi|$, such that $0\leq \langle\psi|E|\psi\rangle\leq 1$ for all normalized $\psi\in\mathbb{C}^N$. But this is equivalent to $0\leq E \leq \mathbf{1}$ --- that is, both $E$ and $\mathbf{1}-E$ must be positive-semidefinite. A collection of $n$ such effects, $E_1,E_2,\ldots, E_n$ with $E_1+E_2+\ldots+E_n=\mathbf{1}$, while each $E_i$ is positive-semidefinite, is known as a \emph{POVM}: a positive operator-valued measure. Indeed, the set of all possible quantum measurements are given by POVMs, and we have just rederived this fact within the framework of GPTs.

Therefore, in this case, we also obtain that $A_+^*=A_+$: the effect and state cones coincide. One might be tempted to conjecture that this is true in general, but it is easy to see that it is not. This property of \textbf{strong self-duality} -- that there exists some inner product such that the state and effect cones are identical -- is a remarkable property that holds only under very special conditions. It is known to be false, for example, for the case that $\Omega_A$ is a square (sometimes called a ``gbit''; Barrett 2007~\cite{Barrett2007}), and it is also false for the example in Figure~\ref{fig_statespace}. However, self-duality is known to follow from a strong symmetry property called \emph{bit symmetry}. Call every pair of pure and perfectly distinguishable states $\omega_1,\omega_2$ a \emph{bit}. Suppose that for every pair of bits $(\omega_1,\omega_2)$ and $(\varphi_1,\varphi_2)$, there is a reversible transformation $T$ such that $T\omega_1=\varphi_1$ and $T\omega_2=\varphi_2$ (this is true for CPT and QT, for example). Then strong self-duality follows (M\"uller and Ududec, 2012~\cite{MuellerUdudec}).

In summary, while states and measurements are described by the same kinds of objects (positive semidefinite matrices) in QT, they will be described by different sets of objects in general GPTs. Moreover, measurements in GPTs can have properties that look quite unusual from the perspective of QT. Consider the following definition:
\begin{definition}
\label{DefPerfDist}
Let $(A,\Omega_A)$ be some state space. A set of states $\omega_1,\ldots,\omega_n$ is called \emph{(jointly) perfectly distinguishable} if there exists a measurement $e^{(1)},\ldots,e^{(n)}$ with $e^{(i)}(\omega_j)=\delta_{ij}$ (that is $1$ if $i=j$ and $0$ otherwise).	

The maximal number $n\in\mathbb{N}$ for which there exists a set of $n$ perfectly distinguishable states is called the \emph{capacity} of the state space, and is denoted $N_A$. On the other hand, we will denote the dimension of the state space by $K_A:=\dim A$.
\end{definition}
In other words, $n$ states are perfectly distinguishable if we can in principle build a detector that, on feeding it with one of the states, tells us with certainty which of the states it was that we have initially prepared (assuming that we are promised that we have indeed prepared one of the states and not another one).

In QT, the $\omega_i$ are density matrices, and they are perfectly distinguishable if and only if their supports are mutually orthogonal (if all states are pure, $\omega_i=|\psi_i\rangle\langle\psi_i|$, this means that $\langle \psi_i|\psi_j\rangle=\delta_{ij}$). The capacity $N_A$ in QT is hence equal to the dimension of the underlying Hilbert space. The dimension of the state space, $K_A$, is the number of real parameters in a Hermitian $N_A\times N_A$-matrix. Simple parameter counting shows that this is $K_A=N_A^2$. In particular, the set of normalized states has dimension $\dim\Omega_A=N_A^2-1$, which equals three for the qubit, in accordance with Figure~\ref{fig_blochball}.

In CPT, on the other hand, we have $K_A=N_A$, which is equal to the cardinality of the sample space on which the states are defined as probability distributions.

In particular, both in CPT and in QT, \emph{if some states are pairwise perfectly distinguishable, then they are automatically jointly perfectly distinguishable}. After all, the condition of joint distinguishability is pairwise orthogonality of the supports. It is perhaps surprising to see that this statement is in general false for GPTs. Let us illustrate this with an example.

\begin{example}[The gbit (Barrett 2007~\cite{Barrett2007})]
\label{ExGbit}
The generalized bit, or ``gbit'', is a state space $(A,\Omega_A)$, where $A=\mathbb{R}^3$, and
\[
   \Omega_A=\left\{(x,y,1)\in\mathbb{R}^3\,\,|\,\, -1\leq x \leq 1,\enspace -1\leq y \leq 1\right\}.
\]
In particular, the normalization functional is $u_A(x,y,z):=z$. If we identify $A$ with its dual space via the dot product, then $u_A=(0,0,1)$. This state space has four pure states (the corners of the square):
\[
   \omega_1=(-1,-1,1),\quad \omega_2=(-1,1,1),\quad \omega_3=(1,1,1),\quad \omega_4=(1,-1,1).
\]
It is a simple exercise to work out the set of effects and the effect cone. Writing $\omega=(x,y,z)$ and demanding that $e(\omega)\in[0,1]$ for all $\omega\in\Omega_A$, we find in particular the following effects: $e^{(x)}(\omega)=\frac 1 2 (z+x)$, $\bar e^{(x)}(\omega)=\frac 1 2 (z-x)$, $e^{(y)}(\omega)=\frac 1 2 (z+y)$, $\bar e^{(y)}(\omega)=\frac 1 2 (z-y)$. Writing these as vectors, we get
\[
   e^{(x)}=\frac 1 2 (1,0,1),\quad \bar e^{(x)}=\frac 1 2(-1,0,1),\quad e^{(y)}=\frac 1 2(0,1,1),\quad \bar e^{(y)}=\frac 1 2(0,-1,1).
\]
It turns out that the set of effects is the convex hull of these effects, the ``unit effect'' (normalization functional) $u_A$, and the zero effect $0$. Geometrically, this can be depicted as in Figure~\ref{fig_gbit}. Note that $A_+$ and $A_+^*$ are not identical. Even if we had chosen another inner product (rather than the dot product) to represent the effects, then the two cones would never align. This is because the gbit is not strongly self-dual (Janotta \emph{et al.}, 2011~\cite{Janotta2011}).

\begin{figure}[ht]
\begin{center}
\includegraphics[width=.7\textwidth]{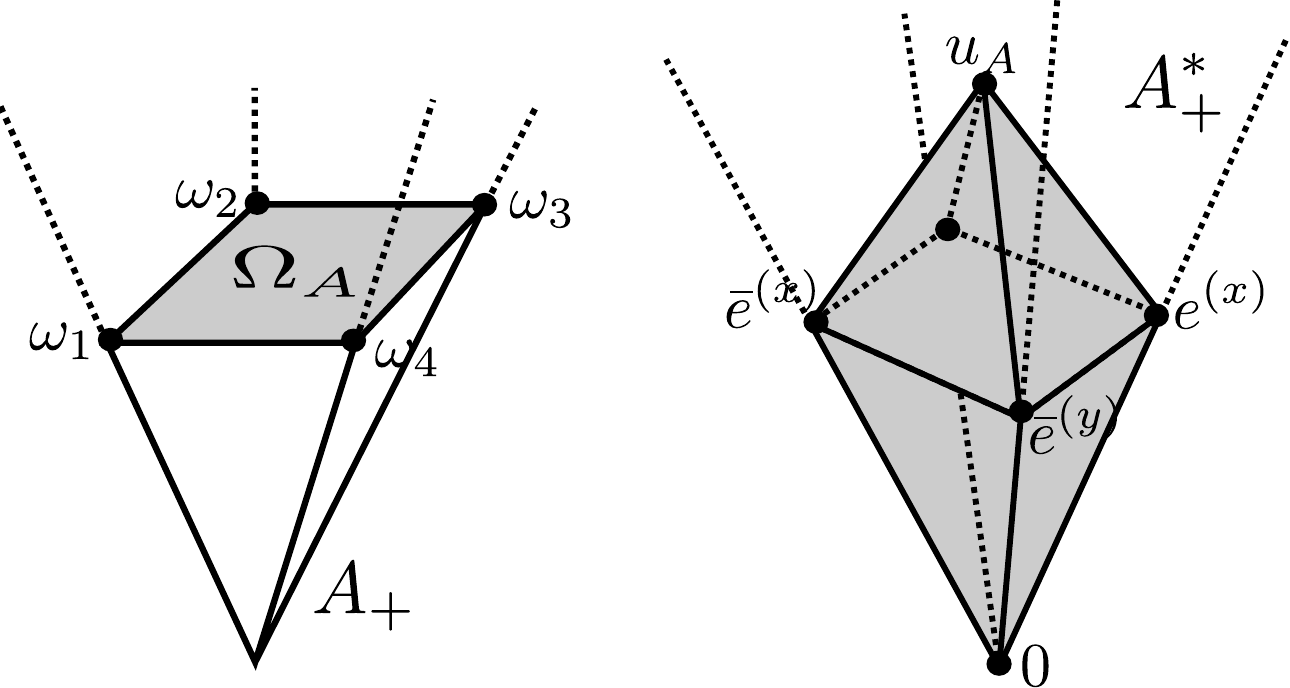}
\end{center}
\caption{\small The ``gbit'' state space (generalized bit). Left: the state cone, with the normalized states in gray. Right: the effect cone, with the set of effects in gray.}\label{fig_gbit}
\end{figure}
More importantly, we now see that \emph{every two distinct pure states are perfectly distinguishable}. For example, consider the pure states $\omega_1$ and $\omega_2$, and consider the two effects $e^{(y)}$ and $\bar e^{(y)}$. Note that these two effects constitute a valid \emph{measurement}: $e^{(y)}+\bar e^{(y)}=u_A$. Moreover, we have $e^{(y)}(\omega_1)=0$ and $e^{(y)}(\omega_2)=1$, and hence $\bar e^{(y)}(\omega_1)=1$ and $\bar e^{(y)}(\omega_2)=0$. That is, the measurement $(e^{(y)},\bar e^{(y)})$ perfectly distinguishes the states $(\omega_1,\omega_2)$.

A similar construction can be made for all other pairs of pure states. However, one can check that \emph{no three states of $\Omega_A$ are jointly perfectly distinguishable}. On the one hand, this shows that the capacity of this state space is $N_A=2$, hence the name ``gbit'' (and not gtrit etc.); on the other hand, it demonstrates that pairwise perfect distinguishability does not in general imply joint perfect distinguishability.
\end{example}

The final ingredient of Figure~\ref{fig_ptm} that we have not yet discussed so far are the \textbf{transformations}. Clearly, a transformation $T$ maps an ingoing state $\omega$ to some outgoing state $\omega'=T\omega$. In general, we can think of transformations from one physical system to another one (for example, mapping the spin state of an electron to the state of a quantum dot, or maps between Hilbert spaces of different dimensions), but for simplicity, let us focus on transformations for which in- and outgoing systems are the same. Consider the following two scenarios, where $\lambda_1,\ldots,\lambda_n$ are probabilities summing to unity, and $\omega_1,\ldots,\omega_n$ are normalized states:
\begin{itemize}
	\item[(i)] A preparation device prepares the state $\omega_i$ with probability $\lambda_i$, resulting in the mixed state $\omega=\sum_i 
	\lambda_i \omega_i$. This mixed state is sent into the transformation device, resulting in some final state $\omega'=T\omega$.
	\item[(ii)] With probability $\lambda_i$, a preparation device prepares the state $\omega_i$, which is then sent into the transformation device, resulting in the final state $\omega'_i=T\omega_i$.
\end{itemize}
Clearly, (i) and (ii) are different descriptions of one and the same laboratory procedure; they must hence result in the exact same statistics of any measurements that we may decide to perform in the end, and therefore lead to the same final state $\omega'$. But this implies that $\displaystyle T\left(\sum_{i=1}^n \lambda_i \omega_i\right)=\sum_{i=1}^n \lambda_i T(\omega_i)$: transformations must be linear. This motivates the following definition.
\begin{definition}[Transformation]
\label{DefTrafo}
Let $(A,\Omega_A)$ be some state space. A \emph{transformation} is a linear map $T:A\to A$ with $T(\Omega_A)\subseteq \Omega_A$, i.e.\ every normalized state is mapped to another normalized state. A transformation $T$ is \emph{reversible} if it is invertible as a linear map and if $T^{-1}$ is a transformation, too.

A \emph{dynamical state space} is a triplet $(A,\Omega_A,\mathcal{T}_A)$, where $(A,\Omega_A)$ is a state space, and $\mathcal{T}_A$ is a compact (or finite) group of reversible transformations.
\end{definition}
This definition subsumes several properties of transformations that are either necessary or desirable. First, transformations $T$ must map valid input state \emph{to valid output states}; second, they should preserve the normalization. This leads to the demand that $T(\Omega_A)\subseteq \Omega_A$. It also implies that $u_A\circ T=u_A$ for every transformation $T$: the normalization after the transformation is the same as before. In some situations, it is important to allow a larger class of \emph{normalization-nonincreasing} transformations; for example, when we are interested in filters and projections as in the case of higher-order interference. For more details on this, see e.g.\ Barrett 2007~\cite{Barrett2007} or Ududec \emph{et al.}, 2010~\cite{Ududec2010}.

Of particular significance are transformations that can be physically undone after they have been implemented: these are the reversible transformations. Namely, after applying $T$ to some state, we can apply $T^{-1}$ to the resulting state, with the total effect of doing nothing. In the following, we will restrict our attention to those, because they will turn out to be particularly important in the context of axiomatic reconstructions of QT. By definition, reversible transformations satisfy $T(\Omega_A)\subseteq \Omega_A$ and $T^{-1}(\Omega_A)\subseteq \Omega_A$ --- but this is only possible if $T(\Omega_A)=\Omega_A$. In other words, \emph{every reversible transformation is a linear symmetry of the state space}.

If we write down any transformation $T$, then it satisfies all the conditions that we need for a map to be a physically implementable operation on some state --- as long as the state space is considered in isolation. In many cases, however, state spaces are subsystems of larger state spaces; this is certainly the case in standard laboratory situations, where a bit or qubit is usually embedded into some sort of infinite-dimensional Hilbert space or operator algebra. Further below, we will discuss how GPT state spaces can be combined via generalizations of QT's tensor product rule to form larger state spaces. In these cases, additional conditions may arise from the demand that the transformations are also valid processes when the state spaces on which they act are part of a larger state space.

Definition~\ref{DefTrafo} incorporates this insight by introducing the formal possibility that \emph{not all mathematically valid transformations are in fact physically allowed}. Among the reversible transformations in particular, a dynamical state space comes with an additional choice of a set $\mathcal{T}_A$ of allowed reversible transformations. The only demand is that if $T$ is an allowed reversible transformation and so is $T'$, then $T'\circ T$ is too: after all, we can implement one transformation after the other. Similarly, ``do nothing'' should be an allowed transformation, i.e.\ $\mathbf{1}\in\mathcal{T}_A$; and the meaning of reversibility demands that $T^{-1}\in\mathcal{T}_A$ whenever $T\in\mathcal{T}_A$. This makes $\mathcal{T}_A$ a group. The demand that $\mathcal{T}_A$ should be topologically closed and bounded can be motivated similarly as we did in the case of the state space. Hence $\mathcal{T}_A$ is a compact matrix group.

\begin{example}[Reversible transformations in QT]
In Example~\ref{ExQuantum}, we have defined the $N$-outcome quantum state space as the set of $N\times N$ density matrices. Based on this definition, we have already derived the most general form of \emph{measurements} in QT: these are the POVMs. Let us now use the GPT framework to deduce the most general form of \emph{reversible transformations}.

As discussed above, a reversible transformation $T$ is an (invertible) linear map $T:\mathbf{H}_N(\mathbb{C})\to\mathbf{H}_N(\mathbb{C})$ which is a \emph{linear symmetry of the state space}. That is, it must map the set $\Omega_A$ of $N\times N$ density matrices onto itself, and it must be linear in the sense of preserving real-linear combinations of Hermitian matrices. (A priori this is completely unrelated to linearity on state vectors in the Hilbert space.) Determining the most general transformations $T$ with this property is hence a mathematical exercise. Its solution is known as \emph{Wigner's theorem} (see e.g.\ Bargmann 1964~\cite{Bargmann1964}). Namely, these turn out to be the transformations of the following form:
\[
   \rho\mapsto U\rho U^{-1},
\]
where $U$ is a unitary or antiunitary map. Which of these transformations can actually be implemented physically depends on auxiliary assumptions. If we only consider single quantum state spaces in isolation, then there is no a priori reason to regard antiunitary transformations as physically impossible. However, if --- as in actual physics --- we consider a full theory of state spaces, combining via the usual tensor product rule, then only the unitary transformations can be implemented. This is because antiunitary maps like the transposition, $\rho\mapsto U\rho U^{-1}=\rho^\top$, are known to generate negative eigenvalues when applied to half of an entangled state. In other words, unitary transformations are \emph{completely positive} while antiunitary transformations are not (Nielsen and Chuang 2000~\cite{NielsenChuang2000}).

Thus, in our physical world, quantum systems come as dynamical state spaces $(A,\Omega_A,\mathcal{T}_A)$, where $A$ and $\Omega_A$ are as defined in Example~\ref{ExQuantum}, and $\mathcal{T}_A$ is the group of unitary conjugations, $\rho\mapsto U\rho U^\dagger$. This is a \emph{subgroup} of the full group of reversible transformations.
\end{example}

The argumentation so far yields a very interesting perspective on the \textbf{superposition principle} of QT. Usually, the superposition principle is seen as some kind of fundamental (and mysterious) principle or axiom of QT. In our case, however, we can see it as some kind of \emph{accidental mathematical consequence} of the shape of the state space. It just so happens to be the case that the pure states are of the form $|\psi\rangle\langle\psi|$, and applying reversible transformations to them yields $|\psi\rangle\langle\psi|\mapsto U|\psi\rangle\langle\psi|U^\dagger$. Restricting our attention to the pure states only, we can thus simplify the mathematical description by ``taking the square root'' in some sense, and consider the map $|\psi\rangle\to U|\psi\rangle$ only (without forgetting that we have to disregard arbitrary phase factors).

But doesn't this argumentation defer the question of ``why the superposition principle'' to ``why the density matrices''? At first sight it seems so, but we will see later that we can derive the shape of the state space --- that is, that it must correspond to the set of density matrices --- from simple information-theoretic principles. The superposition principle will then indeed follow as an accidental consequence in the way just described.

We leave it for the reader as an exercise to check that the \textbf{reversible transformations in CPT} are the permutations: $(p_1,\ldots,p_N)\mapsto (p_{\pi(1)},\ldots,p_{\pi(n)})$, where $\pi:\{1,\ldots,n\}\to\{1,\ldots,n\}$ is one-to-one. For the gbit as defined in Example~\ref{ExGbit}, those transformations are of the form $T=\left(\begin{array}{cc} D & 0 \\ 0 & 1\end{array}\right)\in M_3(\mathbb{R})$, where $D\in M_2(\mathbb{R})$ is any element of $D_2$, the \emph{dihedral group} of order $4$ (the symmetry group of the square).

Above, we have admitted the possiblity that \emph{not all mathematically well-defined transformations are in fact ``physically'' allowed}, but further above, our formalism has forced the state cone $A_+$ and the effect cone $A_+^*$ to be full duals of each other. In other words, we are implicitly working under the so-called \textbf{``no-restriction hypothesis''} (Chiribella, d'Ariano and Perinotti, 2010~\cite{Chiribella2010}): for any given set of states, all mathematically well-defined effects can in principle be implemented (and vice versa). Here we make this assumption mainly for reasons of simplicity, but there are situations in which a more general approach is warranted, e.g.\ in the context of stabilizer quantum mechanics or Spekkens' toy theory (Spekkens 2007~\cite{Spekkens2007}). In this case, one could define, for example, a ``sub-cone'' of physically allowed effects (though this is not the only possibility). For more details on such a more general approach, see e.g.\ Janotta and Lal, 2013~\cite{JanottaLal2013}.

In the example of QT, we have seen that we can represent a quantum bit in two ways: either as the set of $2\times 2$ density matrices, or as the three-dimensional Bloch ball. In fact, all probabilistic predictions are exactly identical for both descriptions. This is an example of a general freedom that we have in representing GPTs:
\begin{definition}[Equivalent state spaces]
\label{DefEquivalent}
Two state spaces $(A,\Omega_A)$ and $(B,\Omega_B)$ are called \emph{equivalent} if there exists an invertible linear map $L:A\to B$ such that $\Omega_B=L(\Omega_A)$. Two \emph{dynamical} state spaces $(A,\Omega_A,\mathcal{T}_A)$ and $(B,\Omega_B,\mathcal{T}_B)$ are called equivalent if they additionally satisfy $\mathcal{T}_B=L\mathcal{T}_A L^{-1}$.
\end{definition}
Equivalent state spaces are indistinguishable in all of their probabilistic properties: to every state $\omega_A\in\Omega_A$, there is a corresponding state $\omega_B=L\omega_A\in\Omega_B$; to every effect $e_A\in A_+^*$ there is a corresponding effect $e_B=e_A\circ L^{-1}\in B_+^*$. Finally, to every transformation $T_A\in\mathcal{T}_A$ there is a corresponding transformation $T_B=LT_AL^{-1}\in\mathcal{T}_B$, such that the outcome probabilities are the same: $e_B T_B\omega_B = e_A L^{-1} L T_A L^{-1} L \omega_A=e_A T_A \omega_A$. Intuitively, equivalent state spaces are of the same convex shape. In particular, equivalent state spaces must have the same dimensions.

\begin{example}
\label{ExLQubit}
As illustrated in Example~\ref{ExQuantum}, the Bloch ball representation and the density matrix representation of the quantum bit are equivalent. In more detail, the dynamical state spaces $(A,\Omega_A,\mathcal{T}_A)$ and $(B,\Omega_B,\mathcal{T}_B)$ are equivalent, where
\begin{eqnarray*}
A&=&\mathbb{R}^4,\quad \Omega_A=\left\{\left(\left.\begin{array}{c} 1 \\ r \end{array}\right)\,\,\right|\,\,r\in\mathbb{R}^3,\enspace |r|\leq 1\right\},\quad \mathcal{T}_A=\left\{\left.\left(\begin{array}{cc} 1 & 0 \\ 0 & R \end{array}\right)\,\,\right|\,\, R\in{\rm SO}(3)\right\},\\
B&=&\mathbf{H}_2(\mathbb{C}),\quad \Omega_B=\{\rho\in B\,\,|\,\,{\rm tr}(\rho)=1,\enspace \rho\geq 0\},\quad \mathcal{T}_B=\left\{\rho\mapsto U\rho U^\dagger\,\,|\,\, U\mbox{ is unitary}\right\},
\end{eqnarray*}	
and an invertible linear map $L:A\to B$ that establishes this equivalence is given by
\[
   L(r_0,r_1,r_2,r_3):=\frac 1 2 \left(\begin{array}{cc} r_0 + r_3 & r_1 - i r_2 \\ r_1 + i r_2 & r_0 - r_3 \end{array}\right).
\]
\end{example}
In many places in the literature (for example in Barrett 2007~\cite{Barrett2007}), one finds an alternative route to the GPT framework. This alternative approach begins with the same laboratory situation as in Figure~\ref{fig_ptm}, but argues less abstractly. It postulates in a more concrete manner that \emph{states are ``lists of probabilities''}, corresponding to a set $e_1,\ldots, e_n$ of \textbf{``fiducial effects''} --- a bunch of outcome probabilities that are \emph{sufficient to fully characterize the state}. For example, for a qubit, these four effects might correspond to the normalization effect, and to the probabilities of measuring ``spin-up'' in $x$-, $y$- and $z$-directions. (Note that these effects do \emph{not} in general constitute a ``measurement'' in the sense of Definition~\ref{DefMeasurements}, i.e.\ they cannot in general be jointly measured).  Knowing these probabilities determines the state and hence all the outcome probabilities of all other possible measurements.

According to this prescription, a state is then simply a list of probabilities
\[
   (e_1(\omega),\ldots,e_n(\omega)).
\]
How does this fit our definition? To see this, consider any state space $(A,\Omega_A)$ in the sense of Definition~\ref{DefStateSpace}. Since $\Omega_A$ is a compact convex set, we can always find some invertible linear map $L:A\to A$ (in fact, many) that maps $\Omega_A$ into the unit cube $\mathcal{C}:=\{(x_1,\ldots,x_n)\in A\,\,|\,\, 0\leq x_i\leq 1\mbox{ for all }i\}$, where $n=\dim A$. Thus, $(A,\Omega_A)$ is equivalent to $(B,\Omega_B)$, where $B=A$ and $\Omega_B=L(\Omega_A)$. Now, every state $\omega=(\omega_1,\ldots,\omega_n)\in\Omega_B$ satisfies $e_i(\omega):=\omega_i\in [0,1]$ by construction, for all $i\in\{1,\ldots,n\}$. Hence, in particular, the $e_i$ are valid effects, and they fit Barrett's definition of ``fiducial effects''.

So far, we have only considered single state spaces. In the next subsection, we will see how GPT state spaces can be combined in a way that generalizes the tensor product rule of QT.

\subsection{Composite state spaces}
\label{SubsecComposite}
Given two state spaces $(A,\Omega_A)$ and $(B,\Omega_B)$, then how can we define a meaningful composite $AB$? The philosophy of the GPT framework is not to ask for a formal rule in the first place, but to strive for the representation of fundamental operational properties that should be captured by such a formalism.

Let us therefore imagine two laboratories that are each \emph{locally} holding systems which are described by state spaces $(A,\Omega_A)$ and $(B,\Omega_B)$. If these are two separated distinguishable laboratories, then we ought to be able to imagine that Alice performs a local experiment, and Bob \emph{independently} performs another local experiment. For example, what Alice can do is to prepare a state $\omega_A$ and ask whether the outcome (effect) $e_A$ happens in her subsequent measurement; the probability of this is $e_A(\omega_A)$. Similarly, Bob can prepare a state $\omega_B$ and observe whether outcome $e_B$ happens, which has probability $e_B(\omega_B)$. Now we can regard this as a single joint experiment, asking whether \emph{both} outcomes have happened. The independent joint preparations should correspond to some valid state $\omega_{AB}\in\Omega_{AB}$ of the two laboratories, and the independent joint measurement (or rather its ``yes''-outcome) should correspond to a valid effect $e_{AB}$. Due to statistical independence, the joint probability must be
\[
   e_{AB}(\omega_{AB})=e_A(\omega_A)\cdot e_B(\omega_B).
\]
Without loss of generality, this allows us to introduce a particular piece of notation: let us write $\omega_{AB}:=\omega_A\otimes\omega_B$ for the independent preparations of the two states, and $e_{AB}=e_A\otimes e_B$ for the independent measurements. Statistical mixtures (i.e.\ convex combinations) of states on $A$ (or on $B$) must lead to the corresponding statistical mixtures on $AB$, which tells us that $\otimes$ must be a bilinear map. Thus, reading $\otimes$ as the usual tensor product of real linear spaces, this will reproduce the correct probabilities.

What we have found at this point is that the joint vector space $AB$ that carries the composite state space must contain the tensor product space $A\otimes B$ as a subspace. This is because for every $\omega_A\in\Omega_A$ and for every $\omega_B\in\Omega_B$, we postulate that there is a state $\omega_A\otimes\omega_B\in\Omega_{AB}$ that describes the independent local preparation of the two states. This implies, on the one hand, that the convex hull of $\Omega_A\otimes\Omega_B$ is contained in $\Omega_{AB}$, and, on the other hand, that $K_{AB}\geq K_A K_B$, where we have used the notation $K_A:=\dim A$ of Definition~\ref{DefPerfDist}. But this neither tells us what the set $\Omega_{AB}$ is, nor does it tell us the vector space $AB$ or its dimension $K_{AB}$.

To narrow the possibilities down in an operationally meaningful way, let us make an additional assumption that is often (but not always) made in the GPT framework. This is a principle called \textbf{Tomographic Locality} (Hardy, 2001~\cite{Hardy2001}): \emph{all states $\omega_{AB}\in\Omega_{AB}$ are uniquely determined by the joint statistics of all \emph{local} measurements.} 

Fundamentally, this amounts to a claim of what we even \emph{mean} by a joint state: the joint state is the thing that tells us all there is to know about the outcomes of local measurements and their correlations (but not more). Formally, this means the following. Take any two states $\omega_{AB},\varphi_{AB}\in\Omega_{AB}$. If
\[
   e_A\otimes e_B(\omega_{AB})=e_A\otimes e_B(\varphi_{AB})
\]
for \emph{all} local effects $e_A\in A_+^*$ and $e_B\in B_+^*$, then $\omega_{AB}=\varphi_{AB}$. In other words, \emph{state tomography can be performed locally}, hence the name of the principle.

Due to linear algebra, this implies that $A_+^*\otimes B_+^*$ linearly spans all of the dual space $(AB)^*$ --- or, in other words, that $AB=A\otimes B$. This is also equivalent to the claim that $K_{AB}=K_A K_B$.

This still does not tell us what the joint state space $\Omega_{AB}$ is --- and this is in fact the generic situation in GPTs: given two state spaces, there are in general \emph{infinitely many inequivalent possible composites} that satisfy the principle of Tomographic Locality. The full range of possibilities is captured by the following definition.
\begin{definition}
\label{DefComposite}
Let $(A,\Omega_A)$ and $(B,\Omega_B)$ be state spaces. A (tomographically local) \emph{composite} is a state space $(AB,\Omega_{AB})$, where $AB=A\otimes B$ and $\Omega_{AB}$ is some compact convex set satisfying
\[
   \Omega_{AB}^{\min}\subseteq \Omega_{AB} \subseteq \Omega_{AB}^{\max}.
\]
The composites $(AB,\Omega_{AB}^{\min})$ and $(AB,\Omega_{AB}^{\max})$ are called the \emph{minimal and maximal tensor products} of $(A,\Omega_A)$ and $(B,\Omega_B)$, and they are defined as follows:
\begin{eqnarray*}
\Omega_{AB}^{\min}&:=&{\rm conv}\{\omega_A\otimes\omega_B\,\,|\,\, \omega_A\in\Omega_A,\enspace \omega_B\in\Omega_B\},\\
\Omega_{AB}^{\max}&:=& \{\omega_{AB}\in AB\,\,|\,\, u_A\otimes u_B(\omega_{AB})=1, \enspace e_A\otimes e_B(\omega_{AB})\geq 0\,\forall\, e_A\in A_+^*,e_B\in B_+^*\}.
\end{eqnarray*}
In the case of \emph{dynamical} state spaces, we demand that $\mathcal{T}_A\otimes\mathcal{T}_B\subseteq \mathcal{T}_{AB}$. As a consequence, the normalization functional on $AB$ is $u_{AB}=u_A\otimes u_B$.
\end{definition}
In other words, $\Omega_{AB}^{\min}$ is the smallest possible composite: it only contains the product states and their convex combinations and not more. This is the necessary minimum to describe independent local state preparations.  On the other hand, $\Omega_{AB}^{\max}$ is the largest possible composite: it contains all vectors that lead to non-negative probabilities on local measurements. This is the maximal possible state space that still admits the implementation of all independent local measurements. Any compact convex state space that lies ``in between'' these two extreme possibilities is a possible composite in the GPT framework.

\begin{example}[Composition of quantum state spaces]
Consider the $M$-outcome quantum state space $(A,\Omega_A)$, and the $N$-outcome quantum state space $(B,\Omega_B)$. The usual tensor product rule of QT tells us that the composite state space should be $(AB,\Omega_{AB})$, where
\[
   AB=\mathbf{H}_M(\mathbb{C})\otimes \mathbf{H}_N(\mathbb{C})\simeq \mathbf{H}_{MN}(\mathbb{C}),\quad \Omega_{AB}=\{\rho\in AB\,\,|\,\,{\rm tr}(\rho)=1,\enspace \rho\geq 0\}.
\]
In other words, the usual tensor product rule of QT tells us that the composite is simply the $(MN)$-outcome quantum state space. This composite satisfies the principle of Tomographic Locality. This can be checked, for example, by noting that $\dim \mathbf{H}_M(\mathbb{C})=M^2$, and thus
\[
   K_{AB}=(MN)^2 = M^2 N^2 = K_A\cdot K_B.
\]
This is \emph{strictly in between the minimal and maximal tensor products}. Namely, $\Omega_{AB}^{\min}$ corresponds to the set of states that can be written as convex combinations of product states: the \emph{separable states}. On the other hand, $\Omega_{AB}^{\max}$ corresponds to the set of operators that yield non-negative probabilities for all local measurements, which includes operators that are not density matrices: so-called \emph{witnesses} or \emph{POPT states} (Barnum \emph{et al.}, 2010~\cite{Barnum2010}). These operators have negative eigenvalues, but these would only be manifested on performing \emph{entangled} measurements.
\end{example}
The composition of \emph{classical} state spaces (those of Example~\ref{ExClassical}) satisfy Tomographic Locality, too. In fact, if $A$ and $B$ are classical state spaces of $M$ resp.\ $N$ outcomes, then $\Omega_{AB}^{\min}=\Omega_{AB}^{\max}$, and so there is a \emph{unique} composite: the classical state space of $MN$ outcomes. It has recently been shown (Aubrun \emph{et al.}, 2019~\cite{Aubrun2019}) that this property characterizes classical state spaces: if $\Omega_{AB}^{\min}=\Omega_{AB}^{\max}$ then necessarily one of $A$ or $B$ must be classical, i.e.\ $\Omega_A$ or $\Omega_B$ must be a simplex.

What is more, composite state spaces automatically satisfy the \textbf{no-signalling principle}. In this sense, GPTs are generalizations of QT that avoid some of the problems of ad-hoc modifications discussed in Section~\ref{SecHowGeneral}.
\begin{lemma}[No-signalling principle]
\label{LemNoSignalling}
Bell scenarios as in Figure~\ref{fig_alicebob} are modelled in GPTs with the prescription
\[
   P(a,b|x,y)=e_x^{(a)}\otimes e_y^{(b)}(\omega_{AB});
\]
that is, $\omega_{AB}$ represents the initial global preparation procedure on the composite state space (see Definition~\ref{DefComposite}); for every choice of input $x$ for Alice (resp.\ $y$ for Bob) there is a corresponding measurement $\{e_x^{(a)}\}_a$ with outcomes labelled by $a$ (resp.\ a measurement $\{e_y^{(b)}\}_b$ with outcomes labelled by $b$), and the local measurements are performed independently. These probability tables satisfy the no-signalling principle.
\end{lemma}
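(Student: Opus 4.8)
The plan is to verify the two no-signalling conditions directly from the prescription, exploiting linearity of effects together with the defining property of a measurement (Definition~\ref{DefMeasurements}) that its effects sum to the normalization functional. Note that the first part of the lemma statement — how a Bell scenario is \emph{modelled} in a GPT — is a definition rather than something to be proved; the genuine content is the final claim, so I would focus on that. First I would recall that the no-signalling conditions require Alice's marginal $P_A(a|x,y):=\sum_b P(a,b|x,y)$ to be independent of $y$, and Bob's marginal $P_B(b|x,y):=\sum_a P(a,b|x,y)$ to be independent of $x$. I would then compute Alice's marginal by summing the prescribed probabilities over Bob's outcome label $b$.

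The central step is to pull the sum over $b$ inside the tensor product. Since $e_x^{(a)}\otimes e_y^{(b)}$ is a linear functional on $AB=A\otimes B$ (recall that Tomographic Locality gives $AB=A\otimes B$, and the composite of Definition~\ref{DefComposite} is built on this space) and the tensor product of functionals is bilinear, I would write
\[
P_A(a|x,y)=\sum_b e_x^{(a)}\otimes e_y^{(b)}(\omega_{AB})=\Big(e_x^{(a)}\otimes \sum_b e_y^{(b)}\Big)(\omega_{AB}).
\]
Now $\{e_y^{(b)}\}_b$ is a measurement, so by Definition~\ref{DefMeasurements} one has $\sum_b e_y^{(b)}=u_B$, and crucially this identity holds for \emph{every} input $y$. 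Substituting yields $P_A(a|x,y)=\big(e_x^{(a)}\otimes u_B\big)(\omega_{AB})$, an expression in which $y$ no longer appears. Hence Alice's marginal is independent of $y$. By the symmetric argument — summing over $a$ and using $\sum_a e_x^{(a)}=u_A$ — Bob's marginal equals $\big(u_A\otimes e_y^{(b)}\big)(\omega_{AB})$ and is independent of $x$.

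I do not expect a genuine obstacle: the result is essentially a one-line consequence of bilinearity and the normalization condition on measurements, and it holds for \emph{any} $\omega_{AB}\in\Omega_{AB}$, irrespective of whether the composite is the minimal tensor product, the maximal tensor product, or anything in between. The only point requiring a moment's care is the bilinearity step, i.e.\ that $\sum_b e_x^{(a)}\otimes e_y^{(b)}=e_x^{(a)}\otimes\big(\sum_b e_y^{(b)}\big)$ as elements of $(A\otimes B)^*$; this is immediate from the definition of the tensor product of linear functionals. As a consistency check I would also verify normalization, $\sum_{a,b}P(a,b|x,y)=(u_A\otimes u_B)(\omega_{AB})=u_{AB}(\omega_{AB})=1$, using the identity $u_{AB}=u_A\otimes u_B$ recorded in Definition~\ref{DefComposite}.
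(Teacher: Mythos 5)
Your proof is correct and follows essentially the same route as the paper's: sum over the other party's outcomes, use bilinearity of the tensor product of effects together with $\sum_b e_y^{(b)}=u_B$ (resp.\ $\sum_a e_x^{(a)}=u_A$), and observe that the resulting marginal $e_x^{(a)}\otimes u_B(\omega_{AB})$ no longer depends on the other input. The extra normalization check and the remark that the argument is independent of which composite $\Omega_{AB}$ is chosen are fine additions but not needed.
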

\begin{proof}
This is very easy to demonstrate: note that the effects of any measurement sum up to the normalization functional, hence, by linearity,
\[
   \sum_b P(a,b|x,y)=\left(e_x^{(a)}\otimes \sum_b e_y^{(b)}\right)(\omega_{AB}) = e_x^{(a)}\otimes u_B(\omega_{AB}).
\]
This is manifestly independent of $y$. An analogous argumentation can be applied with Alice and Bob interchanged, showing that $P(a,b|x,y)$ satisfies the no-signalling conditions.
\end{proof}
This calculation can also be used to define a \textbf{local reduced states} that generalize the partial trace of quantum mechanics:
\[
   \omega_A=\mathbf{1}_A\otimes u_B(\omega_{AB}),\quad \omega_B=u_A\otimes \mathbf{1}_B(\omega_{AB}).
\]
There are some further intuitive and less trivial consequences of the definition of a composite in GPTs. For example, if $\omega_A$ and $\omega_B$ are both pure states then so is $\omega_A\otimes\omega_B$. This can be shown by considering the local \emph{conditional states}, see Janotta and Hinrichsen 2014~\cite{JanottaHinrichsen2014}. Note however that this property fails in general if the principle of Tomographic Locality is not assumed, see e.g.~Barnum \emph{et al.}, 2016~\cite{Barnum2016}.

Another simple consequence of the definition is that the \emph{capacity is supermultiplicative} (recall Definition~\ref{DefPerfDist}):
\begin{lemma}
For any composite $(AB,\Omega_{AB})$ of two state spaces $(A,\Omega_A)$ and $(B,\Omega_B)$, it holds $N_{AB}\geq N_A\cdot N_B$.
\end{lemma}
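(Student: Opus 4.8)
The plan is to lift optimal distinguishing configurations from the two factors to the composite by taking tensor products. First I would invoke Definition~\ref{DefPerfDist} to fix a maximal family of perfectly distinguishable states $\omega_1,\ldots,\omega_{N_A}\in\Omega_A$ together with a witnessing measurement $e^{(1)},\ldots,e^{(N_A)}$ satisfying $e^{(i)}(\omega_k)=\delta_{ik}$, and likewise a family $\varphi_1,\ldots,\varphi_{N_B}\in\Omega_B$ with a measurement $f^{(1)},\ldots,f^{(N_B)}$ obeying $f^{(j)}(\varphi_l)=\delta_{jl}$. The natural candidate in the composite is then the collection of $N_A N_B$ product states $\omega_i\otimes\varphi_j$, to be distinguished by the $N_A N_B$ product functionals $e^{(i)}\otimes f^{(j)}$. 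If I can show these form a valid perfectly-distinguishing configuration in $(AB,\Omega_{AB})$, the bound $N_{AB}\geq N_A N_B$ follows immediately from the definition of capacity.

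Next I would verify the three requirements. The states lie in $\Omega_{AB}$ because every composite contains the product states by the inclusion $\Omega_{AB}\supseteq\Omega_{AB}^{\min}={\rm conv}\{\omega_A\otimes\omega_B\}$ from Definition~\ref{DefComposite}. The functionals form a measurement by bilinearity of $\otimes$ together with the fact that the effects of each factor sum to its normalization functional, so that $\sum_{i,j}e^{(i)}\otimes f^{(j)}=\left(\sum_i e^{(i)}\right)\otimes\left(\sum_j f^{(j)}\right)=u_A\otimes u_B=u_{AB}$. Finally, the orthogonality relation is the product rule on product states: $(e^{(i)}\otimes f^{(j)})(\omega_k\otimes\varphi_l)=e^{(i)}(\omega_k)\,f^{(j)}(\varphi_l)=\delta_{ik}\delta_{jl}$, which in the combined index $(i,j)$ is exactly $\delta_{(i,j),(k,l)}$.

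The one genuinely nonroutine point, and the step I expect to be the main obstacle, is checking that each $e^{(i)}\otimes f^{(j)}$ is a \emph{bona fide} effect on all of $\Omega_{AB}$, not merely on product states (where positivity and the bound $\leq 1$ are obvious). Here I would use the containment $\Omega_{AB}\subseteq\Omega_{AB}^{\max}$. Since effects lie in the respective dual cones, $e^{(i)}\in A_+^*$ and $f^{(j)}\in B_+^*$ (nonnegativity on $\Omega_A$ extends by homogeneity to nonnegativity on $A_+$, and likewise for $B$), the very definition of the maximal tensor product guarantees $(e^{(i)}\otimes f^{(j)})(\omega_{AB})\geq 0$ for every $\omega_{AB}\in\Omega_{AB}^{\max}$, hence for every $\omega_{AB}\in\Omega_{AB}$. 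Combining this nonnegativity of all the summands with the measurement identity $\sum_{i,j}e^{(i)}\otimes f^{(j)}=u_{AB}$ forces each individual summand to satisfy $(e^{(i)}\otimes f^{(j)})(\omega_{AB})\leq u_{AB}(\omega_{AB})=1$ on normalized states. Thus each product functional indeed takes values in $[0,1]$ on $\Omega_{AB}$ and is an effect, completing the verification and establishing that the $N_A N_B$ states $\omega_i\otimes\varphi_j$ are jointly perfectly distinguishable.
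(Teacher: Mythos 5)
Your proof is correct and takes essentially the same route as the paper's: tensor together the maximal distinguishable families and their witnessing measurements, and use the product rule $(e^{(i)}\otimes f^{(j)})(\omega_k\otimes\varphi_l)=\delta_{ik}\delta_{jl}$ to conclude joint distinguishability of the $N_A N_B$ product states. The only difference is that you explicitly verify what the paper leaves implicit --- that the product functionals are genuine effects on all of $\Omega_{AB}$ and sum to $u_{AB}$ --- which you handle correctly via the containments $\Omega_{AB}^{\min}\subseteq\Omega_{AB}\subseteq\Omega_{AB}^{\max}$ from Definition~\ref{DefComposite}.
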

\begin{proof}
Let $\omega^A_1,\ldots,\omega^A_{N_A}$ be some maximal set of perfectly distinguishable states in $\omega_A$, and $e_A^{(1)},\ldots,e_A^{(N_A)}$ be the corresponding measurement that distinguishes these states. Similarly, let $\omega_1^B,\ldots,\omega_{N_B}^B$ be some maximal set of perfectly distinguishable states in $\Omega_B$, and $e_B^{(1)},\ldots,e_B^{(N_B)}$ be the corresponding measurement. Then
\[
   e_A^{(i)}\otimes e_B^{(j)}(\omega_k^A\otimes\omega_l^B)=\delta_{ik}\delta_{jl}=\delta_{(ij),(kl)},
\]
hence the $(N_A N_B)$ product states $\omega_k^A\otimes\omega_l^B\in\Omega_{AB}$ are perfectly distinguishable.
\end{proof}
In the final example, we will see how the GPT framework reproduces some of the beyond-quantum phenomena that we have discussed in Section~\ref{SecHowGeneral}: it admits superstrong nonlocality.
\begin{example}[Composition of two gbits]
Let $(A,\Omega_A)$ and $(B,\Omega_B)$ be the gbit state spaces defined in	Example~\ref{ExGbit}. Consider the maximal tensor product of these two state spaces, $(AB,\Omega_{AB}^{\max})$.

Since $A=B=\mathbb{R}^3$ and $AB=A\otimes B$, this shows that $\Omega_{AB}^{\max}$ is an eight-dimensional compact convex set. What is this set? By definition, the maximal tensor product contains \emph{all} vectors that give non-negative probabilities on the product measurements (and normalization is automatic). Recall Lemma~\ref{LemNoSignalling}: these probabilities are nothing but the probability tables in a Bell experiment. Now, as we have seen in Example~\ref{ExGbit}, there are only two ``pure'' measurements of the gbit, which we have denoted $(e^{(x)},\bar e^{(x)})$ and $(e^{(y)},\bar e^{(y)})$. Thus, it is sufficient to check non-negativity for these two possible local measurements, which yield binary outcomes.

But this leads us to conclude that the states in $\Omega_{AB}^{\max}$ are in linear one-to-one correspondence to the set of all non-signalling (2,2,2)-probability tables. In other words, we conclude that \textbf{the maximal tensor product of two gbits is equivalent to the no-signalling polytope of Figure~\ref{fig_correlations}.} And this argumentation can indeed be made rigorous by slightly more careful mathematical formalization.

In particular, PR-boxes are valid states on $AB$. Actually, if we defined an \emph{entangled state} $\omega_{AB}$ as a state that cannot be written as a convex combination of product states, i.e.\ $\omega_{AB}\in\Omega_{AB}\setminus\Omega_{AB}^{\min}$, then PR-boxes are entangled states, similarly as the singlet is an entangled state in QT.
\end{example}
We can say more about this composite state space $\Omega_{AB}^{\max}$. In Subsection~\ref{fig_correlations}, we have claimed that the no-signalling polytope has 24 extremal points, including 8 versions of the PR-box. While the latter claim is somewhat cumbersome to verify, we can now easily understand the role of the remaining 16 extremal points: these are the pure product states. As illustrated in Figure~\ref{fig_gbit}, a single gbit has four pure states $\omega_1,\ldots,\omega_4$. Therefore, $\Omega_{AB}^{\max}$ must contain the 16 pure product states $\{\omega_i^A\otimes\omega_j^B\}_{i,j=1,\ldots,4}$.

The construction above can be generalized in an obvious way to more than two parties, and also to local systems with more than two pure measurements and two outcomes (Barrett 2007~\cite{Barrett2007}). What would our world look like if it was described by this kind of theory (sometimes colloquially called ``boxworld'') instead of QT? For example, what kind of reversible transformations would be possible? While QT admits a large group of reversible transformations (the unitaries), it can be shown that boxworld admits \textbf{only trivial reversible transformations}: local operations and permutations of subsystems. In particular, no correlations can be reversibly created, and no non-trivial computation can ever be reversibly performed (Gross \emph{et al.}, 2010~\cite{Gross2010}). It also means that no reversible transformation can map a pure product state to a PR-box state. This is in contrast to QT, where we can certainly engineer unitary time evolutions that map a pure product state to, say, a singlet state. In some sense, entanglement in a boxworld universe would represent a scarce resource which cannot be regained reversibly once it is spent.\\

\textbf{Further reading.} We have restricted our considerations to compositions of \emph{pairs} of state spaces, and to \emph{tomographically local} composites. In this case, there is an obvious list of requirements for composition, and we have incorporated all these requirements in Definition~\ref{DefComposite}: product states and product measurements should be possible, and (as enforced by the tensor product rule) independent local operations should commute. In general, however, we may be interested in multipartite systems similar to the circuit model of quantum computation. There, it becomes very cumbersome to work out the set of constraints that arise from the multipartite structure. In this case, \textbf{category theory} becomes the tool of choice, see (Coecke and Kissinger, 2017~\cite{CoeckeKissinger2017}) for an introduction. Moreover, some of the abstract linear algebra and convex geometry formalism can be traded for a more picturesque \emph{diagrammatic} formalism which allows to prove results in QT and beyond in particularly intuitive ways, see e.g.\ Chiribella \emph{et al.}, 2010~\cite{Chiribella2010}. As an example, the old problem of how to deal with the tensor product of \emph{quaternionic} quantum systems (which also falls into the GPT framework) can be resolved by constructing dagger-compact categories of such systems, in which case composition becomes well-defined and well-behaved (Barnum \emph{et al.}, 2016~\cite{Barnum2016}).

\section{Quantum theory from simple principles}
After this formal tour de force, we are ready to understand how QT can be derived from some simple physical or information-theoretic principles. This section sketches one such possible derivation published by Masanes and M\"uller, 2011~\cite{MasanesMueller2011}. It relies on axioms that have first been written down by Hardy (2001)~\cite{Hardy2001}. However, there are many alternative routes. Please see the ``Further reading'' paragraph at the end of this section for an overview.

As before, we will restrict ourselves to finite-dimensional state spaces, we  will work under the ``no-restriction hypothesis'', and we will assume the principle of Tomographic Locality (see Definition~\ref{DefComposite}). In addition, we will postulate two further principles:
\begin{itemize}
	\item \textbf{Subspace Axiom}. For every $N\in\mathbb{N}$, there is a dynamical state space $(A_N,\Omega_N,\mathcal{T}_N)$ of capacity $N$ (e.g.\ for $N=2$ a bit, for $N=3$ a trit etc.) Moreover, if $e^{(1)},\ldots,e^{(N)}$ is any measurement that perfectly distinguishes $N$ states in $\Omega_N$, then the subset of states $\omega$ with $e^{(N)}(\omega)=0$ is equivalent to $\Omega_{N-1}$, and the subset of transformations $T\in\mathcal{T}_N$ that preserve this subset is equivalent to $\mathcal{T}_{N-1}$.
	\item \textbf{Continuous Reversibility}. The group of reversible transformations $\mathcal{T}_N$ is connected (``continuous''), and for every pair of pure state $\varphi,\omega\in\Omega_N$ there is some reversible transformation $T\in\mathcal{T}_N$ that maps one state to the other, i.e.\ $T\omega=\varphi$.
\end{itemize}
In all of the rest of this section we will assume that these two principles hold.

The principle of Continuous Reversibility expresses the physical intuition that time evolution should be continuous and reversible, and that ``all pure states are equivalent'' under such time evolution. The Subspace Axiom is particularly well-motivated from scientific practice: whenever we claim to have a quantum bit (or even a classical bit) in the laboratory, then this bit is not a free-floating stand-alone system, but it is embedded into a larger system (the rest of the world). Our description of the \emph{state space of the bit}, and of its reversible transformations, should be independent of the rest of the world, and in particular independent of other zero-probability events.

For example, if we have a three-level atomic system in the laboratory, and we are sure (say, due to constraints arising from the experimental setup) that we will never find the particle in the third level, then we should be able to treat this system as a two-level system. In the notation above, the three-level system would be described by some state space $\Omega_3$ (the $3\times 3$ density matrices in the quantum case). There would be some measurement $e^{(1)},e^{(2)},e^{(3)}$ that perfectly distinguishes the three levels (the measurement operators $|1\rangle\langle 1|,|2\rangle\langle 2|,|3\rangle\langle 3|$ in the quantum case), and the set of states
\[
   \tilde \Omega_2:=\{\omega\in\Omega_3\,\,|\,\, e^{(3)}(\omega)=0\}
\]
would be equivalent to a two-level system (the density matrices $\rho=\left(\begin{array}{cc}\tilde \rho & 0 \\ 0 & 0\end{array}\right)$ with $\tilde\rho$ a $2\times 2$ density matrix in the quantum case).

It turns out that these principles are sufficient to uniquely determine the quantum state space. In this section, we will sketch a strategy to reconstruct QT from these postulates alone. This is quite remarkable, given that neither the GPT framework nor any of the postulates makes use of any mathematical elements that are typically considered to constitute the basic structure of QM: complex numbers, wave functions, operators, or multiplication (a priori, GPTs do not carry any algebraic structure). That is, we will show the following theorem:
\begin{theorem}
\label{TheQT}
In the framework described above (which presumes Tomographic Locality), under the Subspace Axiom and the postulate of Continuous Reversibility, the dynamical state space of capacity $N$ must be equivalent to $(A_N^{\rm QT},\Omega_N^{\rm QT},\mathcal{T}_N^{\rm QT})$, where
\begin{itemize}
	\item $A_N^{\rm QT}=\mathbf{H}_N(\mathbb{C})$ is the real vector space of Hermitian $N\times N$-matrices,
	\item $\Omega_N^{\rm QT}=\{\rho\,\,|\,\,{\rm tr}(\rho)=1,\rho\geq 0\}$ is the set of $N\times N$ density matrices,
	\item $\mathcal{T}_N^{\rm QT}=\{\rho\mapsto U\rho U^\dagger\,\,|\,\, U^\dagger U=\mathbf{1}\}$ is the group of unitary conjugations.
\end{itemize}
\end{theorem}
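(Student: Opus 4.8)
The plan is to reconstruct the capacity-$N$ system from the bottom up, following the chain: first fix the \emph{shape} of the bit ($N=2$), then its \emph{dimension} (which secretly encodes the complex numbers), and finally \emph{glue} bits together using the Subspace Axiom. Throughout I would lean heavily on the assumed compactness of $\mathcal{T}_N$, which lets me average inner products and thereby make reversible transformations orthogonal.

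First I would analyze $(A_2,\Omega_2,\mathcal{T}_2)$ and show the bit is a Euclidean ball. Averaging an arbitrary inner product over $\mathcal{T}_2$ with respect to Haar measure yields a $\mathcal{T}_2$-invariant inner product, so every reversible transformation acts orthogonally. Continuous Reversibility says $\mathcal{T}_2$ is connected and transitive on pure states, hence the pure states form a single connected orbit and therefore all share the same norm: they lie on a sphere. Group-averaging also produces a unique invariant state, the barycenter of $\Omega_2$, which I would identify with the maximally mixed state. For a bit, each pure state $\omega$ has a unique pure partner $\omega'$ perfectly distinguishable from it, with $\frac12(\omega+\omega')$ equal to that center, so pure states come in antipodal pairs through the center; combined with transitivity and connectedness this forces the orbit to be a full sphere $S^{d-1}$ and $\Omega_2$ to be its convex hull, i.e.\ a $d$-dimensional ball.

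Next I would pin down $d=3$, which I expect to be the main obstacle. A classical bit ($d=1$) admits only a two-element reversible group and so violates Continuous Reversibility, giving $d\geq 2$. To exclude the remaining values I would invoke composition: by Tomographic Locality the two-bit system has $A_{22}=A_2\otimes A_2$ with $K_{22}=K_2^2=(d+1)^2$, and Continuous Reversibility must also hold on this composite, so its reversible group is connected, compact, transitive on the pure states of $\Omega_{22}$, and contains the local group $\mathcal{T}_2\times\mathcal{T}_2$. Reconciling the local orthogonal actions on each $S^{d-1}$ with a global transitive action, together with the rigid dimension count $(d+1)^2$, singles out $d=3$; in particular this is where Tomographic Locality excludes the real ``rebit'' disk ($d=2$) and the quaternionic ball ($d=5$), whose natural composites do not meet $K_{22}=(d+1)^2$. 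Ruling out \emph{all} other $d$ (not merely the division-algebra cases) is the genuinely hard content, and it is precisely here that the complex numbers are selected, since the $3$-ball is the Bloch ball of the complex qubit with $\mathcal{T}_2$ acting as $SO(3)$, lifting to unitary conjugations by $SU(2)$.

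Finally I would pass to general $N$ via the Subspace Axiom. It guarantees that inside $\Omega_N$ every capacity-$2$ face obtained by setting one outcome of a distinguishing measurement to zero is again a bit, hence by the previous step a complex Bloch ball, with the induced transformations forming the qubit group. Averaging an inner product over $\mathcal{T}_N$ gives self-duality; I would then coordinatize the pure states of $\Omega_N$ through their restrictions to these overlapping qubit subspaces and show their transition probabilities obey the rule $|\langle\psi|\varphi\rangle|^2$. Coherently assembling the qubit blocks yields an embedding of $\Omega_N$ as the density matrices in $\mathbf{H}_N(\mathbb{C})$, and transitivity of $\mathcal{T}_N$ on pure states together with the orthogonality then identifies $\mathcal{T}_N$ with the unitary conjugations $\rho\mapsto U\rho U^\dagger$; equivalence of dynamical state spaces (Definition~\ref{DefEquivalent}) closes the argument. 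This last gluing is the routine-but-lengthy part, whereas the dimension lock $d=3$ remains the crux.
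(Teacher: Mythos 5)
Your overall architecture (fix the shape of the bit, then its dimension, then build up to general $N$) is the same as the paper's, and your first step --- Haar-averaging an inner product so that $\mathcal{T}_2$ acts orthogonally, which places all pure states on a sphere around the invariant state --- is exactly the paper's argument. But there is a genuine gap at precisely the point you flag as the crux: the exclusion of $d\neq 3$ is never actually argued. ``Reconciling the local orthogonal actions with a global transitive action, together with the rigid dimension count $(d+1)^2$, singles out $d=3$'' restates the goal rather than proving it, and the one concrete mechanism you offer --- that the rebit ($d=2$) and quaternionic ($d=5$) composites fail the count $K_{22}=(d+1)^2$ --- cannot carry the load, since the task is to show that \emph{no} admissible composite exists for \emph{any} $d\neq 3$, not that particular ``natural'' composites have the wrong dimension. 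The paper's argument has three ingredients absent from your sketch: (a) multiplicativity of the capacity, $N_{AB}=N_AN_B$ (derived from $\mu_{AB}=\mu_A\otimes\mu_B$), which together with $K_{AB}=K_AK_B$ forces $K=N^r$ and hence $d=2^r-1$ odd --- this, not dimension counting of composites, is what disposes of $d=2$ and $d=5$; (b) the classification of compact connected groups acting transitively on odd-dimensional spheres, giving $\mathcal{T}_2={\rm SO}(d)$ up to the $d=7$, ${\rm G}_2$ exception; and (c) a representation-theoretic contradiction: applying the Subspace Axiom twice inside the two-bit system produces a sub-bit $\Omega_2'$ whose $d$-dimensional Bloch subspace $L_2$ is invariant under ${\rm SO}(d-1)\otimes{\rm SO}(d-1)$, which is impossible for $d\geq 4$ because the fundamental representation of ${\rm SO}(d-1)$ is then complex-irreducible, so the tensor-product representation admits no such small invariant subspace; only $d=3$, where ${\rm SO}(2)$ is Abelian and hence acts reducibly, survives. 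Note in particular that you never invoke the Subspace Axiom in this step, whereas the paper cannot do without it.

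There are two further problems. First, in the ball step, transitivity and connectedness do \emph{not} force the pure-state orbit to be the full sphere: an orbit of equal-norm states can be a proper low-dimensional submanifold of the sphere whose convex hull still spans (the $N=3$ quantum state space, whose pure states form a four-dimensional orbit inside a seven-sphere, shows the inference fails in general). The paper closes this by proving strict convexity of $\Omega_2$ from the Subspace Axiom --- a face of $\Omega_2$ containing more than one state would be equivalent to a capacity-$1$ space, which must be a single point --- and your argument needs that ingredient too. Second, your claim that ``averaging an inner product over $\mathcal{T}_N$ gives self-duality'' is false: averaging yields an \emph{invariant} inner product, which is far weaker than $A_+=A_+^*$; in the paper the duality between states and effects is supplied by the no-restriction hypothesis, and self-duality as such would require a separate argument (e.g.\ bit symmetry). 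Relatedly, your plan for general $N$ --- coordinatizing pure states by overlapping qubit faces and ``coherently assembling'' them into $\mathbf{H}_N(\mathbb{C})$ --- is asserted rather than argued, and such gluing is exactly where consistency problems arise; the paper instead first identifies the two-bit state space explicitly (showing the sub-bit's pure states are $|\psi\rangle\langle\psi|$ with $|\psi\rangle=\alpha|00\rangle+\beta|11\rangle$ and that local rotations generate everything), then gets $k$ bits from the fact that two-qubit unitaries generate all unitaries, and only then reaches general $N$ by one more application of the Subspace Axiom.
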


Arguably, we can thus see the resulting reconstruction as some sort of explanation for ``why'' QM has these counterintuitive structural elements in the first place. It also reinforces the earlier insight that QM is in some sense a very ``rigid'' theory which is hard to modify without breaking some cherished physical principles. For more discussions on this, see e.g.\ Koberinski and M\"uller, 2018~\cite{KoberinskiMueller2018}, and the references therein.

We will proceed in a couple of steps. Our first step will be to understand why the quantum bit (or, rather, any capacity-two-system $\Omega_2$ that satisfies the postulates) must be equivalent to a Euclidean ball. This will also provide a quite illuminating explanation for the Bloch ball and its properties.

\subsection{Why is the qubit described by a Bloch ball?}
Let us begin with the most boring and trivial case: $\Omega_1$, the state space with capacity $N=1$. In the quantum case, this corresponds to the $1\times 1$ density matrices -- containing only the trivial state $\rho=(1)$ on the trivial Hilbert space $\mathbb{C}^1$. But we do not know this yet, so let us \emph{only} work with the postulates above.
\begin{lemma}
The state space of capacity $N=1$ is equivalent to the trivial state space $(\mathbb{R},\{1\})$. In other words, $\Omega_1$ contains only a single state.
\end{lemma}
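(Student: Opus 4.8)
The claim is that $\Omega_1$ collapses to a single point, after which the algebraic statement follows immediately. The plan is to argue by contradiction directly from the meaning of capacity together with the no-restriction hypothesis: if $\Omega_1$ contained two distinct states, I would manufacture a two-outcome measurement that perfectly distinguishes some pair of states, contradicting $N_{A_1}=1$. Since the Subspace Axiom guarantees that a capacity-$1$ state space $(A_1,\Omega_1,\mathcal{T}_1)$ exists, and since $\Omega_1$ carries at least the trivial one-outcome measurement $u_{A_1}$, the set $\Omega_1$ is nonempty; it therefore suffices to exclude the existence of two distinct states.

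\textbf{Key steps.} First I would unfold Definition~\ref{DefPerfDist}: two states $\omega_+,\omega_-$ are perfectly distinguishable precisely when there is an effect $e$ with $e(\omega_+)=1$ and $e(\omega_-)=0$ (take $e^{(1)}=e$ and $e^{(2)}=u_{A_1}-e$). So $N_{A_1}=1$ says that no such $e$ exists. Next, assuming $\omega\neq\omega'$ in $\Omega_1$, I would pick a linear functional $g\in A_1^{*}$ that is non-constant on $\Omega_1$ (one exists because the affine hull of $\Omega_1$ is then at least one-dimensional). By compactness of $\Omega_1$, $g$ attains a maximum $M$ at some $\omega_+\in\Omega_1$ and a minimum $m$ at some $\omega_-\in\Omega_1$ with $m<M$, and $M>m$ forces $\omega_+\neq\omega_-$. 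Setting $e:=(g-m\,u_{A_1})/(M-m)$ gives, for every $\omega\in\Omega_1$, the value $e(\omega)=(g(\omega)-m)/(M-m)\in[0,1]$, with $e(\omega_+)=1$ and $e(\omega_-)=0$. Under the no-restriction hypothesis this $e$ is a genuine effect, so $\{e,\,u_{A_1}-e\}$ is a valid measurement perfectly distinguishing $\omega_+$ and $\omega_-$, contradicting $N_{A_1}=1$. Hence $\Omega_1=\{\omega_0\}$ is a single point. Finally, the dimension clause of Definition~\ref{DefStateSpace} gives $\dim A_1=\dim\Omega_1+1=1$, so $A_1\cong\mathbb{R}$; the normalization functional $u_{A_1}$ is a nonzero linear map $A_1\to\mathbb{R}$ with $u_{A_1}(\omega_0)=1$, hence an isomorphism sending $\Omega_1$ to $\{1\}$, which establishes equivalence with $(\mathbb{R},\{1\})$ in the sense of Definition~\ref{DefEquivalent}. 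The group $\mathcal{T}_1$ must then be trivial, since the identity is the only invertible linear map on $\mathbb{R}$ fixing $\{1\}$.

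\textbf{Main obstacle.} There is really only one nontrivial point: producing an effect that attains the extreme values $0$ and $1$ on a pair of states. This rests on two ingredients that I would invoke explicitly --- compactness of $\Omega_1$, so that the extremal values of $g$ are actually attained, and, more importantly, the no-restriction hypothesis, which is exactly what certifies that the normalized functional $e$ lies in the allowed effect set (without it the separating functional might not correspond to a physically available effect, and the argument would stall). I would also flag the small bookkeeping point that the constant shift must be written as $m\,u_{A_1}$ rather than a bare scalar, so that $e$ is genuinely an element of $A_1^{*}$. Finally it is worth remarking that this lemma uses neither Continuous Reversibility nor the full strength of the Subspace Axiom beyond the mere existence of $\Omega_1$.
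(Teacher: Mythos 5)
Your proof is correct, and its core mechanism is the same as the paper's: turn a linear functional that separates two states into a two-outcome measurement attaining the values $0$ and $1$, contradicting capacity $N=1$. The difference is in how the separating effect is produced. The paper invokes convex-geometry machinery: it takes an exposed point $\omega_1$ with supporting hyperplane $H_1$, finds the parallel supporting hyperplane $H_2$ on the opposite side, and reads off the effect as the affine functional interpolating between the two hyperplanes. You instead take an arbitrary non-constant $g\in A_1^*$, use compactness of $\Omega_1$ to attain its maximum and minimum, and rescale via $e=(g-m\,u_{A_1})/(M-m)$. These are two routes to the same object (the paper's $H_1,H_2$ are precisely the extreme level sets of your $g$), but your version is more elementary --- the exposed-point theorem is genuinely needed only for the subsequent strict-convexity lemma about $\Omega_2$, not here --- and your explicit handling of the shift $m\,u_{A_1}$ makes rigorous exactly the point the paper waves at with ``which becomes linear if we add in the normalization degree of freedom.'' You also go beyond the paper's sketch in two useful ways: you carry out the final bookkeeping showing $\Omega_1=\{\omega_0\}$ really is equivalent to $(\mathbb{R},\{1\})$ with trivial $\mathcal{T}_1$ (the paper asserts the lemma statement but only argues the single-state part), and you correctly flag that the construction relies on the no-restriction hypothesis, which the paper uses silently since its Definition of effects already builds it in.
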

We will not formally prove this lemma, but instead give some intuition for why it is true. Consider any state space $(\mathbb{R}^d,\Omega)$ (with $d\in\mathbb{N}$ arbitrary) that contains \emph{more} than one state. If $\varphi_1,\varphi_2$ are two different states in $\Omega$, then the line segment $\{\lambda\varphi_1+(1-\lambda)\varphi_2\,\,|\,\,0\leq\lambda\leq 1\}$ is also contained in $\Omega$, hence $\dim\Omega\geq 1$, and so $d\geq 2$.

Now, convex geometry (Webster 1994) tells us that we can always find some pure state $\omega_1\in\Omega$ that is an \emph{exposed point}: namely, there is a hyperplane $H_1$ (of dimension $d-1$) that touches $\Omega$ in exactly that point:
\[
   H_1\cap \Omega = \{\omega_1\}.
\]

\begin{figure}[ht]
\begin{center}
\includegraphics[width=.2\textwidth]{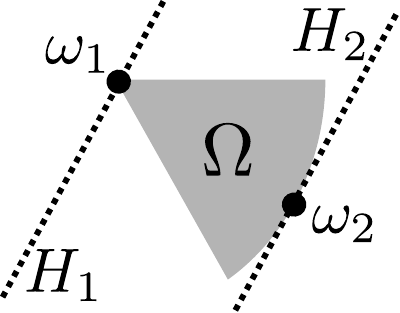}
\end{center}
\caption{\small State spaces $\Omega$ that contain more than one state have capacity at least $N=2$ (for argumentation see main text). The plane here is the affine space $\{x\in A\,\,|\,\,u_A(x)=1\}$.}\label{fig_omegaone}
\end{figure}
Since $\Omega$ contains more than one state, there must be other states of $\Omega$ that are not contained in $H_1$, but that are contained in hyperplanes parallel to $H_1$. In particular, there will be one such hyperplane (call it $H_2$) that touches $\Omega$ on the ``opposite side'' as depicted in Figure~\ref{fig_omegaone}. Hence all of $\Omega$ is contained in $H_1$ and $H_2$ and in between, and $H_2\cap\Omega$ is not empty. Pick now any state $\omega_2$ from this intersection.

Now every hyperplane is the level set of an affine functional (which becomes linear if we add in the normalization degree of freedom). That is, we can find some linear functional $e^{(1)}\in(\mathbb{R}^d)^*$ such that $e^{(1)}(x)=1$ for all $x\in H_1$ and $e^{(1)}(x)=0$ for all $x\in H_2$. Since all $\omega\in\Omega$ lie in between the two hyperplanes, we have $0\leq e^{(1)}(\omega)\leq 1$ for all $\omega\in\Omega$. Thus, $e^{(1)}$ is a valid effect (recall that we assume the no-restriction hypothesis in all of these lecture notes; otherwise, we would need an additional argument to show that $e^{(1)}$ is physically allowed). Define $e^{(2)}:=u-e^{(1)}$, where $u$ is the normalization functional. Then $(e^{(1)},e^{(2)})$ constitutes a measurement that perfectly distinguishes the two states $\omega_1$ and $\omega_2$. Therefore the capacity is $N\geq 2$.

This shows that state spaces with capacity $N=1$ contain only a single state.

Next, let us use similar reasoning to say something slightly more interesting:
\begin{lemma}
The state space $\Omega_2$ of capacity $N=2$ is \emph{strictly convex}, i.e.\ does not contain any lines in its boundary.
\end{lemma}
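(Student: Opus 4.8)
The plan is to argue by contradiction and to convert a hypothetical flat piece of the boundary into an effect that violates the single-point consequence of the Subspace Axiom. The first thing I would isolate is that tool: applying the Subspace Axiom at $N=2$ and combining it with the previous lemma, one obtains that for \emph{any} two-outcome measurement $(e^{(1)},e^{(2)})$ which perfectly distinguishes two states of $\Omega_2$, the set $\{\omega\in\Omega_2 : e^{(2)}(\omega)=0\}$ is equivalent to $\Omega_1$ and hence is a \emph{single} state. Everything else will be arranged so as to contradict this.

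So suppose $\Omega_2$ is not strictly convex. Then there are distinct states $\omega_1\neq\omega_2$ whose connecting segment $[\omega_1,\omega_2]$ lies entirely in the boundary $\partial\Omega_2$. I would pick a point $m$ in the relative interior of this segment. Since $m$ is a relative boundary point of the compact convex set $\Omega_2$, there is a nonzero linear functional $f$, non-constant on $\Omega_2$, with $f(\omega)\geq f(m)$ for all $\omega\in\Omega_2$. A one-line convexity computation then shows that $f$ is constant along the whole segment: writing $m=t\omega_1+(1-t)\omega_2$ with $t\in(0,1)$ and using $f(\omega_1),f(\omega_2)\geq f(m)$ forces $f(\omega_1)=f(\omega_2)=f(m)$. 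After subtracting a multiple of the normalization functional $u$ (which is constant on $\Omega_2$ and hence only shifts $f$ there) I may assume $\min_{\Omega_2}f=f(m)=0$; since $f$ is non-constant, $c:=\max_{\Omega_2}f>0$.

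The key step is to recognise $e:=f/c$ as the second effect of a genuine distinguishing measurement. By construction $e$ takes values in $[0,1]$ on $\Omega_2$, so it is an effect and $(u-e,\,e)$ is a valid two-outcome measurement. Choosing any maximizer $\omega_3$ of $e$, I have $e(\omega_3)=1$ while $e(\omega_1)=0$, so this measurement perfectly distinguishes the two distinct states $\omega_1$ and $\omega_3$. The fact recorded in the first paragraph now applies with $e^{(2)}=e$ and tells me that $\{\omega\in\Omega_2: e(\omega)=0\}$ is a single state. But $e$ vanishes on the entire segment $[\omega_1,\omega_2]$, which contains two distinct states --- a contradiction. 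Hence no segment lies in $\partial\Omega_2$, i.e.\ $\Omega_2$ is strictly convex.

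I expect the main obstacle to be exactly this middle step: promoting a bare supporting hyperplane to an effect that is literally the second outcome of a \emph{perfectly distinguishing} two-outcome measurement, since only then is the Subspace Axiom applicable. It is instructive that this is precisely where the gbit (the square) gets ruled out --- there the analogue of $\{e^{(2)}=0\}$ is an entire edge rather than a point, so the square violates the Subspace Axiom outright. Notably, this route establishes strict convexity from the Subspace Axiom alone and does not seem to require Continuous Reversibility.
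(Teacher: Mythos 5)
Your proof is correct and takes essentially the same route as the paper's: a supporting hyperplane along the hypothetical flat boundary piece is promoted to the second effect of a perfectly distinguishing two-outcome measurement, and the Subspace Axiom then forces its zero set---which contains the whole segment $[\omega_1,\omega_2]$---to be equivalent to the single-state space $\Omega_1$, a contradiction. The paper gives only a geometric sketch (two parallel touching hyperplanes $H_1$, $H_2$), so your write-up mainly supplies the rigor it deliberately omits; your closing observation that Continuous Reversibility is not needed here is likewise consistent with the paper's argument.
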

Again, we will not really give a formal proof, but appeal to geometric intuition. Suppose that $\Omega_2$ was \emph{not} strictly convex. Then, with a similar construction as above, we could find some hyperplane $H_1$ that touches $\Omega_2$ in more than one point, see Figure~\ref{fig_omegatwo}. Pick any $\omega_1\in H_1\cap\Omega$. Furthermore, let $H_2$ be the ``opposite'' hyperplane, and pick some $\omega_2\in H_2\cap\Omega$. As above, we can associate a measurement $(e^{(1)},e^{(2)})$ to these two hyperplanes that perfectly distinguishes $\omega_1$ and $\omega_2$.

\begin{figure}[ht]
\begin{center}
\includegraphics[width=.5\textwidth]{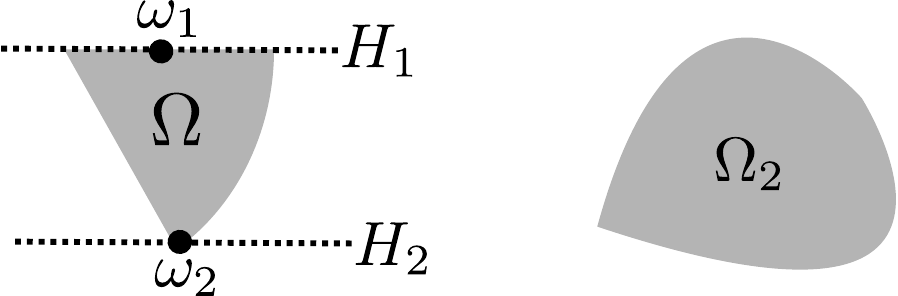}
\end{center}
\caption{\small Assuming the Subspace Axiom, the bit state space $\Omega_2$ must be strictly convex, i.e.\ cannot contain lines in its boundary like the convex set depicted on the left. Instead, it could look like the convex set on the right.}\label{fig_omegatwo}
\end{figure}
Let us now invoke the Subspace Axiom. It tells us that the set
\[
   \{\omega\in\Omega_2\,\,|\,\,e^{(2)}(\omega)=0\}=\{\omega\in\Omega_2\,\,|\,\, e^{(1)}(\omega)=1\}=H_1\cap\Omega
\]
must be linearly equivalent to $\Omega_1$. But this set contains infinitely many states, whereas $\Omega_1$ contains only a single state. This is a contradiction.

We thus conclude that $\Omega_2$ must roughly look like the convex set in the right of Figure~\ref{fig_omegatwo}. Formally, this means that all of its boundary points must be pure states. Let us now additionally invoke the postulate of Continuous Reversibility and show the following:
\begin{lemma}
The state space $\Omega_2$ is equivalent to a Euclidean unit ball of some dimension.	
\end{lemma}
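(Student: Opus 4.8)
The plan is to exploit the compact symmetry group $\mathcal{T}_2$ together with the two facts already in hand: strict convexity (so that the boundary points of $\Omega_2$ are exactly its pure states) and the transitive action of $\mathcal{T}_2$ on pure states guaranteed by Continuous Reversibility. The underlying idea is the standard convex-geometry fact that a convex body whose linear symmetry group acts transitively on its boundary must be an ellipsoid, and hence a ball after a linear change of coordinates. The first step is to produce an invariant ``maximally mixed'' state. Since $\mathcal{T}_2$ is a compact group it carries a normalized Haar measure, so averaging any state gives $\omega_*:=\int_{\mathcal{T}_2}T\omega\,dT$, which lies in $\Omega_2$ (a barycenter of states in a compact convex set) and is fixed by every $T\in\mathcal{T}_2$ by invariance of the measure. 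Because there are at least two distinct pure states (capacity $N=2$) while $\omega_*$ is invariant, $\omega_*$ cannot be pure; by strict convexity the pure states are precisely the boundary points, so $\omega_*$ lies in the \emph{interior} of $\Omega_2$.

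Next I would linearize around $\omega_*$. Reversible transformations preserve normalization, $u_A\circ T=u_A$, so each $T$ fixes $\omega_*$ and maps the subspace $V_0:=\ker u_A$ to itself. This yields a $\mathcal{T}_2$-invariant splitting $A=\mathbb{R}\omega_*\oplus V_0$, and the translated body $\Omega_2-\omega_*$ lives inside $V_0$ with $\mathcal{T}_2$ acting on it linearly, since $T(\omega-\omega_*)=T\omega-\omega_*$. Averaging an arbitrary inner product over the group then furnishes a $\mathcal{T}_2$-invariant inner product on $V_0$, with respect to which every $T\in\mathcal{T}_2$ is an orthogonal transformation.

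I would then read off the shape. Since $\mathcal{T}_2$ acts transitively on the boundary (the pure states) and preserves the invariant norm, all boundary points of $\Omega_2-\omega_*$ have one and the same norm $r$; that is, $\partial\Omega_2$ is contained in the sphere of radius $r$ centered at $\omega_*$. But a convex body with the origin in its interior whose boundary lies on a single sphere must be the whole ball: along any ray from the origin the radial function equals the constant distance $r$, so $\Omega_2-\omega_*$ is exactly the closed ball of radius $r$ in the invariant inner product. Choosing an orthonormal basis of $V_0$ for that inner product and rescaling by $1/r$ produces an invertible linear map $L:A\to\mathbb{R}^{\dim A}$ carrying $\Omega_2$ to the unit ball sitting in the affine hyperplane $\{x_0=1\}$ --- precisely the Bloch-ball-type representation of Example~\ref{ExLQubit}, here of dimension $\dim\Omega_2=\dim A-1$. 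This establishes the asserted equivalence.

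The main obstacle is not any single computation but the bookkeeping that keeps the argument honest: one must confirm that $\omega_*$ genuinely lies in the interior (which is exactly where the hypothesis $N=2$, guaranteeing more than one pure state, enters) and that the group acts transitively on \emph{all} of the boundary (which couples strict convexity to the transitivity clause of Continuous Reversibility). Once these two structural facts are secured, the Haar averaging and the radial-function step are routine. I note that connectedness of $\mathcal{T}_2$ is not actually needed for this lemma --- only compactness and transitivity --- so Continuous Reversibility is used here solely through its transitivity clause.
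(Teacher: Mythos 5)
Your proposal is correct and follows essentially the same route as the paper: Haar-averaging over $\mathcal{T}_2$ to obtain the invariant maximally mixed state, an invariant inner product by group averaging, transitivity on pure states placing them all on a sphere around the fixed point, and strict convexity to conclude that the body is the full ball. The differences are cosmetic --- you verify explicitly that the fixed point is interior and finish with a radial-function argument, whereas the paper finishes by noting that a sphere point failing to be a state would force lines in the boundary --- but the substance is identical.
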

In other words, we will now derive the fact that a quantum bit is described by the Bloch ball. However, we will not (yet) be able to say that this ball must be three-dimensional.

Let us start by defining what one may call the ``maximally mixed state'' of $\Omega_2$: pick any pure state $\omega\in\Omega_2$, and define $\mu:=\int_{\mathcal{T}_2}T\omega\,{\mathrm d}T$; that is, we integrate over the invariant (Haar) measure of the group of reversible transformations $\mathcal{T}_2$ (group averaging). It follows that $T\mu=\mu$ for all $T\in\mathcal{T}_2$, and it is easy to check that $\mu$ is in fact the \emph{unique} state with this property.

\begin{figure}[ht]
\begin{center}
\includegraphics[width=.4\textwidth]{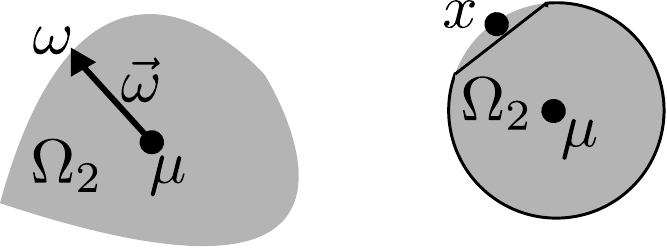}
\end{center}
\caption{\small Left: The definition of Bloch vectors embeds the normalized states into a linear space (of one dimension less than the linear space on which the state cone lives). Right: If any point on the sphere does \emph{not} correspond to a valid state, then this contradicts the strict convexity of $\Omega_2$.}\label{fig_ball}
\end{figure}
For states $\omega\in\Omega_2$, we define the corresponding ``Bloch vector'' $\vec\omega:=\omega-\mu$ (see Figure~\ref{fig_ball}). Hence, $T\omega=\varphi$ if and only if $T\vec\omega=\vec\varphi$, and $\vec\mu=0$. Then $\mathcal{T}_2$ acts on the linear space that contains the Bloch vectors. Now we can use a well-known trick from group representation theory (Simon 1996~\cite{Simon1996}), and construct an \emph{invariant inner product}. Namely, if ``$\cdot$'' is an arbitrary inner product on the space of Bloch vectors, then we can define
\[
   \langle \vec x,\vec y\rangle = \alpha \int_{\mathcal{T}_2} (T\vec x)\cdot (T\vec y)\, {\mathrm d}T,
\]
where $\alpha>0$ is some normalization constant to be fixed soon. It follows that $\langle T\vec x, T\vec y\rangle=\langle \vec x,\vec y\rangle$ for all $T\in\mathcal{T}_2$. This tells us that we can choose coordinates in the Bloch space such that the $T$ are orthogonal matrices. Moreover, if $\omega$ and $\varphi$ are arbitrary pure states, then, due to Continuous Reversibility, there is some transformation $T\in\mathcal{T}_2$ such that $T\omega=\varphi$. Thus
\[
   \|\vec\varphi\|^2=\langle\vec\varphi,\vec\varphi\rangle = \langle T\vec\omega,T\vec\omega\rangle=\langle\vec\omega,\vec\omega\rangle=\|\vec\omega\|^2.
\]
The Bloch vectors of all pure states have the same Euclidean length, and we can fix it to $\|\vec\varphi\|=1$ by a suitable choice of $\alpha$. Hence, \emph{all pure states lie on the unit sphere surrounding $\mu$}, see Figure~\ref{fig_ball} right. Could there be any states on the sphere which do not correspond to pure states? This is only possible if the topological boundary of $\Omega_2$ contains lines, contradicting the strict convexity of $\Omega_2$.

\subsection{Why is the Bloch ball three-dimensional?}
We have now reconstructed the Bloch ball representation of a qubit, but not its dimensionality. If the dimension of the bit state space is $d=1$, then we recover an old friend from Figure~\ref{fig_simplices}: the classical bit. But, as we have seen in Subsection~\ref{SubsecComposite}, composing classical bits will give us classical state spaces with a \emph{discrete} (not connected) group of reversible transformations. The Bloch ball dimension must thus be $d\geq 2$.

We can say more about its dimension by considering \emph{composites of several bits}.

The first step is to prove that the \emph{capacity is multiplicative}:
\begin{equation}
   N_{AB}=N_A N_B.
   \label{eqCapacityMult}
\end{equation}
This follows from two lemmas that are proven by making use of all the postulates (see the paper by Masanes and M\"uller, 2011~\cite{MasanesMueller2011} for details): first, that the maximally mixed state composes as $\mu_{AB}=\mu_A\otimes\mu_B$; and second, that the maximally mixed state on any system $A$ can be written $\mu_A=\frac 1 {N_A}\sum_{i=1}^{N_A}\omega_i^A$, where $\omega_1^A,\ldots,\omega_{N_A}^A$ is a maximal set of pure and perfectly distinguishable states of $A$. In light of Eq.~(\ref{eqCapacityMult}), we can now view the dimension $K$ of the state space as a function of the capacity $N$. As first argued by Hardy (2001), the fact that $K$ is also multiplicative on composition, i.e.\ $K_{AB}=K_A K_B$, enforces that we must have the relation
\begin{equation}
   K=N^r,
   \label{eqKNR}
\end{equation}
where $r\in\mathbb{N}$ is some integer. This fits quite nicely out observation after Definition~\ref{DefPerfDist}: CPT has $K=N$ (i.e.\ $r=1$) and QT has $K=N^2$ (i.e.\ $r=2$). This suggests that the unknown exponent $r$ is somehow related to the ``order of interference'' of the corresponding theory as introduced in Subsection~\ref{SubsecInterference}.

Since $\dim\Omega=K-1$, the dimension $d$ of the Bloch ball must be one of
\[
   d=2^r-1 \in \{1,3,7,15,31,\ldots\}.
\]
We have already excluded $d=1$, and we would like to show that $d=3$ is the unique possibility consistent with the postulates. To this end, let us consider the group of reversible transformations $\mathcal{T}_2$ of a single bit. What can we say about it? We know that it must be a compact connected group, and we also know that it must satisfy the principle of Continuous Reversibility: all pure states are connected by some reversible transformation. In other words, \emph{$\mathcal{T}_2$ must act transitively on the sphere}.

What groups satisfy these requirements? In fact, for arbitrary ball dimensions $d\in\mathbb{N}$, there are many such groups. For example, for $d=6$, it can be ${\rm SO}(6)$, ${\rm SU}(3)$ or ${\rm U}(3)$ (see Masanes et al., 2014, for a complete list). However, if $d$ is odd (as we have shown above) then the answer is pretty simple, but with a surprising twist:
\begin{itemize}
	\item If $d\neq 7$ then we must have $\mathcal{T}_2={\rm SO}(d)$.
	\item If $d=7$ then we either have $\mathcal{T}_2={\rm SO}(7)$ or $\mathcal{T}_2={\rm G}_2$, the exceptional Lie group.
\end{itemize}
For simplicity, let us in the following ignore the ${\rm G}_2$-case (how to treat this case, and all other details of the proof, can be found in the paper by Masanes and M\"uller, 2011~\cite{MasanesMueller2011}). To understand why $d=3$ follows from our postulates, we have to consider a \emph{pair} of two bits. Due to Eq.~(\ref{eqCapacityMult}), its state space $\Omega_{2,2}$ is equivalent to $\Omega_4$. Consider two perfectly distinguishable states $\omega_0$ and $\omega_1$ of a single bit (as points of the state space, they must lie on opposite sides of the $d$-dimensional Bloch ball), and two corresponding effects $e_0$ and $e_1$ with $e_i(\omega_j)=\delta_{ij}$. Then the four states $\omega_i^A\otimes\omega_j^B\in\Omega_{2,2}$ are perfectly distinguishable. Now consider the subset
\[
   \Omega_2':=\{\omega\in\Omega_{2,2}\,\,|\,\, e_0^A\otimes e_1^B(\omega)=e_1^A\otimes e_0^B(\omega)=0\}.
\]
This subset contains two of the product states, $\omega_0^A\otimes\omega_0^B\in\Omega'_2$ and $\omega_1^A\otimes\omega_1^B\in\Omega'_2$. Using the Subspace Axiom twice, it follows that $\Omega'_2$ is again equivalent to a bit -- it also corresponds to a $d$-dimensional Bloch ball that somehow sits inside the joint vector space $AB=\mathbb{R}^{d+1}\otimes \mathbb{R}^{d+1}$. And it must contain at least one (actually, many) non-product pure states $\omega$ --- these will be entangled states.

Returning to bit $A$, consider rotations $R\in{\rm SO}(d)$ that preserve the axis that connects $\omega_0$ and $\omega_1$; these will be rotations with $e_i\circ R =e_i$ for $i=0,1$. The subgroup of such $R$ is equivalent to ${\rm SO}(d-1)$. We can also perform such rotations on bit $B$. But since they preserve the two effects $e_i$, they also preserve the bit $\Omega'_2$:
\[
   R\otimes S (\Omega'_2)=\Omega'_2\mbox{ for all }R,S\in{\rm SO}(d-1).
\]
Now, $\Omega'_2$ spans a pretty small affine subspace (we can turn it into a linear subspace $L_2$ by substracting the maximally mixed state of $\Omega'_2$): we have $\dim L_2=d$. On the other hand, ${\rm SO}(d-1)\otimes {\rm SO}(d-1)$, where each factor acts in its fundamental representation, is a pretty large group. It acts on a large subspace $\mathbb{R}^{d-1}\otimes\mathbb{R}^{d-1}$ that sits inside $AB$. We have just seen that it must preserve the small $d$-dimensional subspace $L_2$. Is this possible at all?

The answer comes from group representation theory. It turns out that \emph{the fundamental representation of ${\rm SO}(d-1)$ is complex-irreducible if $d\geq 4$.} Thus, it follows that ${\rm SO}(d-1)\otimes {\rm SO}(d-1)$, as a representation of two copies of this group, \emph{also} acts irreducibly: but this implies that there \emph{cannot} be any proper invariant subspaces like $L_2$. This rules out the possibility that $d\geq 4$.

Why is the case $d=3$ different? This is due to the fact that ${\rm SO}(3-1)$ is an \emph{Abelian} group. Hence all its irreducible representations must be one-dimensional, and so its acts \emph{reducibly} on $\mathbb{C}^2$. This is the reason why the argumentation above does not apply in this case. We can hence summarize our finding with the following slogan:
\begin{shaded}
The Bloch ball of Quantum Theory is three-dimensional ``because'' ${\rm SO}(d-1)$ is non-trivial and Abelian only for $d=3$.
\end{shaded}
There is a surprising twist to this insight. Garner et al., 2017~\cite{GarnerMuellerDahlsten2017}, consider a thought experiment in which the two arms of a Mach-Zehnder interferometer are described by a $d$-dimensional Bloch ball state space. They study the question for which $d$ this ``fits into relativistic spacetime'' (under some background assumptions), in the sense that relativity of simultaneity is satisfied. Under one set of assumptions, it turns out that only $d=3$ is possible --- and the reason is, once again, that ${\rm SO}(d-1)$ is only non-trivial and Abelian for $d=3$. This is the same ``mathematical reason'' as above, but with a different physical interpretation: now ${\rm SO}(d-1)$ corresponds to ``local phase transformations'' that do not alter the global statistics, and commutativity of this group is enforced by relativity. This points to a fascinating interplay between information-theoretic and spacetime properties of QT; see also M\"uller and Masanes 2013~\cite{MuellerMasanes2013} and Daki\'c and Brukner 2013~\cite{DakicBrukner2013} for further insights into this relation.

The $d=7$ Bloch ball with its transitive group ${\rm G}_2$ appears as a curious special case. While the above argumentation shows incompatibility with our postulates also for this case, there was some hope for a while that one can construct a non-quantum composite state space of $7$-balls that satisfies the principles of Tomographic Locality and Continuous Reversibility, but not the Subspace Axiom, see e.g.\ Daki\'c and Brukner, 2013~\cite{DakicBrukner2013}. Unfortunately, this possibility has since been ruled out for the case of two bits in Masanes et al., 2014~\cite{Masanes2014}. However, it is not known whether such a construction might be possible in the case of $n\geq 3$ bits. While Krumm and M\"uller, 2019~\cite{KrummMueller2019}, rule out such non-quantum state spaces for ${\rm SO}(d)$ with $d\neq 3$, it remains open whether there is a curious post-quantum ${\rm G}_2$-related theory on more than two bits.

\subsection{How do we obtain the quantum state spaces for $N\geq 3$?}
The next step is to show that the state space of $k$ bits, for any $k\geq 2$, is equivalent to the state space of $k$ quantum bits. We begin with the case $k=2$. From the argumentation above, we know that the two-bit dynamical state space can be written as $(\mathbb{R}^4\otimes\mathbb{R}^4,\Omega_4,\mathcal{T}_4)$, i.e.\ $\Omega_4$ is a 15-dimensional compact convex set.

We already know that the states and transformations of single bits are equivalent to those of the quantum bit. Let us use this fact to introduce an equivalent representation in the sense of Definition~\ref{DefEquivalent}. Recall the linear map $L$ from Example~\ref{ExLQubit}, mapping the Bloch vector representation of a qubit to its density matrix representation. Let us now apply the invertible linear map $L\otimes L$ to map $\mathbb{R}^4\otimes\mathbb{R}^4$ into $\mathbf{H}_2(\mathbb{C})\otimes\mathbf{H}_2(\mathbb{C}))\simeq \mathbf{H}_4(\mathbb{C})$. In this representation, the elements of $\Omega_4$ become self-adjoint unit-trace matrices. We know that $\Omega_4$ contains all quantum product states and their convex combinations (the separable states), but we do \emph{not} yet know that $\Omega_4$ is exactly the set of $4\times 4$ density matrices.

To show that it is, we have to return to the previous subsection. The sub-bit $\Omega'_2$ contains the two antipodal product states $\omega_0^A\otimes\omega_0^B$ and $\omega_1^A\otimes\omega_1^B$, but there is a $2$-sphere of pure states ``in between''. Mapping out the action of ${\rm SO}(2)\otimes{\rm SO}(2)$ on these states, and analyzing how the ${\rm SO}(3)$-rotations of $\Omega'_2$ have to interact with those, one can show with some tedious calculations that these pure state must correspond to $|\psi\rangle\langle\psi|$ for $|\psi\rangle=\alpha|00\rangle+\beta|11\rangle$. Acting on those states via local rotations produces all pure quantum states of $\Omega_4$. Since we can similarly generate all quantum effects, and since these are full duals of each other, there cannot be any further states. This shows that $\Omega_4$ is equivalent to the two-qubit quantum state space. Moreover, the rotations that we have just described turn out to generate the full group of unitary conjugations.

If we now have $k\geq 3$ bits, then we can repeat the above argumentation for every pair among the $k$ bits. Since the unitary gates on \emph{pairs} of qubits generate \emph{all} unitaries, this implies that $\mathcal{T}_{2^k}$ must contain all unitary conjugations. These generate all quantum states and effects. Furthermore, any additional non-unitary transformation would map outside of the quantum state space.

This reconstructs QT for $N=2^k$. For capacities that are not a power of two, we can simply invoke the Subspace Axiom to derive the quantum state space (and the group of unitary conjugations) also in this case. For example, $\Omega_3$ is embedded in the two-bit state space $\Omega_4$, and the Subspace Axiom tells us that it must have the form that we expect.

This concludes our proof of Theorem~\ref{TheQT}, and our reconstruction of finite-dimensional quantum theory.\\

\textbf{Further reading.} The search for alternative axiomatizations of QT dates back to Birkhoff and von Neumann (1936)~\cite{BirkhoffVonNeumann1936}. It was followed by foundational work on Quantum Logic (Piron 1964)~\cite{Piron1964} as well as mathematical work on the characterization of the state spaces of operator algebras (Alfsen and Shultz, 2003~\cite{AlfsenShultz2003}) and several attempts to pursue a derivation of QT as above, for example in the operationally motivated work of Ludwig (1983)~\cite{Ludwig1983} and in the description of ``relational quantum mechanics'' by Rovelli (1996)~\cite{Rovelli1996}. The rise of quantum information theory has shifted the focus: it became clear that the main features of quantum theory are already present in finite-dimensional systems, and that the notion of composition plays an extraordinarily important role in its structure. This shift of perspective has led to a new wave of attempts to derive the quantum formalism from simple principles, pioneered by Hardy (2001)~\cite{Hardy2001}. Despite the importance and ingenuity of Hardy's result, there remained some problems to be cured --- in particular, one of the postulates from which he derived the quantum formalism was termed the ``simplicity axiom'', stating that the state space should be in some sense the smallest possible for any given capacity. In particular, this left open the possibility that there is in fact an infinite sequence of theories, characterized by the ``order of interference'' parameter $r$, see Eq.~(\ref{eqKNR}), and QT is just the $r=2$ case. This was excluded ten years later, see Daki\'c and Brukner 2011~\cite{DakicBrukner2011}, Masanes and M\"uller 2011~\cite{MasanesMueller2011}, and Chiribella \emph{et al.} 2011~\cite{Chiribella2011} (see also d'Ariano, Chiribella, and Perinotti, 2017~\cite{dAriano2017}). These works gave complete reconstructions of the formalism of QT. A lot more progress and insights have been gained since then. For example, there is now a new reconstruction by Hardy (2011)~\cite{Hardy2011} which does not make use of the Simplicity Axion, a diagrammatic reconstruction based on category theory (Selby \emph{et al.}, 2018~\cite{Selby2018}), a reconstruction ``from questions'', i.e.\ based on the complementarity structure of propositions (H\"ohn and Wever 2017~\cite{HoehnWever2017}, and H\"ohn 2017~\cite{Hoehn2017a,Hoehn2017b}); there are several beautiful works by Wilce on deriving the more general Jordan-algebraic state spaces from the existence of ``conjugate systems'' resembling QT's maximally entangled states (e.g.\ Wilce 2017~\cite{Wilce2017}); and there is now a derivation of QT from \emph{single-system} postulates only, namely spectrality and strong symmetry (Barnum and Hilgert, 2019~\cite{BarnumHilgert2019}), an immensely deep result that significantly improves on earlier work by Barnum \emph{et al}, 2014~\cite{Barnum2014}. This list is far from complete, and it certainly excludes important work that does \emph{not} fall into the GPT framework but relies, for example, more on the device-independent formalism mentioned at the end of Subsection~\ref{SubsecNonlocality}.

\section{Conclusions}
The framework of Generalized Probabilistic Theories (GPTs) yields a fascinating ``outside perspective'' on QT. It tells us that QT is just one possible theory among many others that could potentially describe the statistical aspects of nature. These theories share many features with QT, like entanglement or the no-cloning theorem (see Barnum et al., 2007~\cite{Barnum2007}), but they also differ in some observable aspects, e.g.\ in the set of Bell correlations that they allow, in the group structure of their reversible transformations, or in the interference patterns that they generate on multi-slit arrangements.

However, we have seen that QT is still special: it is the unique GPT that satisfies a small set of simple information-theoretic principles. These principles are formulated in purely operational terms, without reference to any of the mathematical machinery of QT like state vectors, complex numbers, operators, or any sort of algebraic structure of observables. Thus, reconstructing QT from such principles can tell us, in some sense, ``why'' QT has its counterintuitive mathematical structure.

These results give us arguably important insights into the logical structure of our physical world. But then, \emph{what exactly} do they tell us? Can we learn anything about \emph{how to interpret} QT, and about the nature of the quantum world? The hope for a positive answer to this question has been famously raised by Fuchs (2003)~\cite{Fuchs2003}. Fuchs' hope was that a reconstruction of QT would ground it in large parts on information-theoretic principles, \emph{but not completely}. He wrote: \emph{``The distillate that remains --- the piece of quantum theory with no information theoretic significance --- will be our first unadorned glimpse of `quantum reality'. Far from being the end of the journey, placing this conception of nature in open view will be the start of a great adventure.''}

However, the recent reconstructions, including the one summarized in these lecture notes, seem to have given us derivations of QT from \emph{purely} information-theoretic principles, full stop. What do we make of this? At the conference  ``Quantum Theory: from Problems to Advances'' in V\"axj\"o, 2014, \v{C}aslav Brukner argued as follows: \emph{``The very idea of quantum states as representatives of information --- information that is sufficient for computing probabilities of outcomes following specific preparations --- has the power to explain why the theory has the very mathematical structure that it does. This in itself is the message of the reconstructions.''} It is possible to acknowledge this beautiful insight while remaining completely agnostic about the problem of interpretation. Or one may contemplate a bolder possibility: perhaps our world is at its very structural bottom fundamentally probabilistic and information-theoretic in nature (M\"uller, 2020~\cite{Mueller2020})? Whatever this may mean, or whichever position one may want to take, information-theoretic reconstructions of QT can be a fascinating and  enlightening piece of puzzle in the great adventure to make sense of our quantum world.\\

\textbf{Acknowledgments.} I am grateful to the organizers of the Les Houches summer school on Quantum Information Machines 2019 --- to Michel Devoret, Benjamin Huard, and Ioan Pop --- for making this inspiring event possible. I would also like to thank the audience for stimulating and fun discussions during and after the lectures. This research was supported in part by Perimeter Institute for Theoretical Physics. Research at Perimeter Institute is supported by the Government of Canada through the Department of Innovation, Science and Economic Development Canada and by the Province of Ontario through the Ministry of Research, Innovation and Science.\\

\end{document}